\documentclass[12pt,fleqn,a4paper]{article}
% packages

%--
\usepackage[longnamesfirst,round]{natbib} % bib style
\usepackage{import,standalone,mathpazo} % import other tex files (e.g. figures) without preambule
\usepackage{amsfonts,amsmath,mathtools,mleftright,mathrsfs,amssymb} % math stuffs
\usepackage[colorlinks=true,citecolor=blue,urlcolor=blue,linkcolor=blue,pagebackref]{hyperref}
\usepackage{graphicx,subfig,multirow} % graph stuffs
\usepackage{tikz,pgfplots} % 
\usepackage[]{mathtools} % 
\mathtoolsset{showonlyrefs}
\usepackage{enumitem,calrsfs}
\setitemize{noitemsep,topsep=0pt}
\usepackage{doi} % for doi link
\usepackage{fullpage,caption} % set the default 4 inches margin to be 1 inch
\widowpenalty=10000
\clubpenalty=10000
% tikz
\pgfplotsset{
    compat=newest,
    scaled y ticks=false,
    scaled x ticks=false,
    yticklabel style={
        /pgf/number format/fixed,
        /pgf/number format/precision=2
	},
    xticklabel style={
        /pgf/number format/fixed,
        /pgf/number format/precision=2}}

\usepgfplotslibrary{dateplot}

%--Settings for pgfplots
  \definecolor{lgrey}{HTML}{B9C2CA}
  \definecolor{mgrey}{HTML}{8B96A2}
  \definecolor{dgrey}{HTML}{5D6974}
  \definecolor{aqua1}{HTML}{0f636d}
  \definecolor{aqua2}{HTML}{3ce0d5}
  
\pgfplotsset{
  yticklabel style={
        /pgf/number format/fixed,
        /pgf/number format/precision=2},
  scaled y ticks=false,
  compat=newest,
  mbaseplot/.style={
      legend style={draw=none,fill=none},
      x tick label style={font=\footnotesize},
      y tick label style={font=\footnotesize},
      x label style={font=\small}, 
      y label style={font=\small},
      legend style={draw=none,fill=none,font=\scriptsize},
      title style={font=\small},
      tick style={thick,black,on layer=axis foreground}, 
      major grid style={thin,lgrey,on layer=axis background}, 
	  minor grid style={thin,lgrey,on layer=axis background}, 
      axis x line=bottom,
      axis y line=left,
      axis line style={thick,black,on layer=main}},
  mlineplot/.style={
     mbaseplot,
     xmajorgrids=true,
     ymajorgrids=true,
     legend style={
     cells={anchor=west},
     draw=none}}}
% --
\pgfdeclarelayer{background}
\pgfdeclarelayer{foreground}
\pgfdeclarelayer{fforeground}
\pgfsetlayers{background,main,foreground,fforeground}
% --
\usepgfplotslibrary{fillbetween,groupplots}
\usepgfplotslibrary{colorbrewer}
\usetikzlibrary{matrix}
\usetikzlibrary{calc}
% --
\makeatletter
\def\pgfplots@drawtickgridlines@INSTALLCLIP@onorientedsurf#1{}
\makeatother 
% --
\makeatletter % https://tex.stackexchange.com/a/205193/156344
\newcommand*\short[1]{\expandafter\@gobbletwo\number\numexpr#1\relax}
\makeatother

% commands
\newcommand{\ind}[1]{\mathbf{1}_{\left\{#1\right\}}}

\renewcommand{\S}{{\mathcal S}}
\newcommand{\F}{{\mathcal F}}
\newcommand{\FF}{{\mathbb F}}
\newcommand{\N}{{\mathbb N}}
\newcommand{\R}{{\mathbb R}}

\newcommand{\Mloc}{\mathcal{M}_{\mathrm{loc}}}
\newcommand{\fignotes}[1]{
    \vspace{0\baselineskip}
    \begin{center}
    \parbox{0.9\textwidth}{\small
    \addtolength{\baselineskip}{0.15\baselineskip}
    % \emph{Figure notes.
    #1}
    \end{center}}
    
\captionsetup{position=below,
              labelsep=colon,
              labelfont={small,sc},
              textfont={small},
              skip=0.5cm,
              justification=raggedright,
              margin=0.5cm}
% Theorems --
\usepackage{ntheorem}
\theoremstyle{plain}
\newtheorem{theorem}{Theorem}
\newtheorem{corollary}{Corollary}

\newtheorem{lemmaA}{Lemma}[section]
\newtheorem{proposition}{Proposition}
\theorembodyfont{\upshape}

% PROOFs
\newenvironment{proof}[1][\proofname]
{\par\normalfont\trivlist\item[\hskip\labelsep\textbf{#1}.]\ignorespaces}{\hfill$\blacksquare$\endtrivlist}
\newcommand{\proofname}{Proof}

%\nocite{*} % !!! this prints all the references
\usepackage{setspace}
\setstretch{1.30}

% title
\title{Perpetual Futures Pricing\footnote{First version: October 30, 2023. This paper merges and extends two independent papers with the same title written by the first two authors in July 2023 and by Urban Jermann in September 2023. Julien Hugonnier collectively thanks the Spring 2023 class of the FIN404 Derivatives course in the Master of Financial Engineering at EPFL for excellent research assistance on the topic of this paper.}}
\author{
Damien Ackerer\footnote{EPFL and Swissblock Technologies AG. Email:  dackerer@swissblock.net}
\and
Julien Hugonnier\footnote{EPFL and CEPR. Email: julien.hugonnier@epfl.ch}
\and
Urban Jermann\footnote{Wharton and NBER. Email: jermann@wharton.upenn.edu}
}
\date{September 3, 2024\vspace{-0.5cm}}%\today\vspace{-0.5cm}}

% ---------------------------------------------------
\usepgfplotslibrary{external}
\tikzexternalize[prefix=figures/]
\tikzsetfigurename{AJH-Figure_}
\tikzexternalize
\expandafter\gdef\csname c@tikzext@no@\pgfkeysvalueof{/tikz/external/figure name}\endcsname{1}%
% ---------------------------------------------------
\begin{document}
% ---------------------------------------------------
\maketitle

\begin{abstract}
\noindent
Perpetual futures are contracts without expiration date in which the anchoring of the futures price to the spot price is ensured by periodic funding payments from long to short. We derive explicit expressions for the no-arbitrage price of various perpetual contracts, including linear, inverse, and quantos futures in both discrete and continuous-time. In particular, we show that the futures price is given by the risk-neutral expectation of the spot sampled at a random time that reflects the intensity of the price anchoring. Furthermore, we identify funding specifications that guarantee the coincidence of futures and spot prices, and show that for such specifications perpetual futures contracts can be replicated by dynamic trading in primitive securities.\\[-0.4\baselineskip]

\noindent \textbf{Keywords.} Cryptocurrencies, Derivatives, Futures contracts.\\[-0.4\baselineskip]

\noindent \textbf{JEL Classification.} E12, G13.

\end{abstract}

\clearpage
% ---------------------------------------------------
\section{Introduction}
% ---------------------------------------------------
\shortcites{he2022fundamentals,paradigm2021power}

% long intro on perps
Perpetual futures contracts are financial derivatives that offer the same characteristics as traditional futures
contracts but without an expiration date. Like traditional futures, perpetual futures allow traders to speculate at no
cost on the price movements of an underlying asset without actually holding it. The fact that the contract does not
have a fixed maturity date presents two related advantages. First, it implies that traders can take positions for the
duration of their choice without having to rollover from maturing contracts to newly minted contracts and thus
tremendously simplifies the investment process. Second, and related, the fact that a single contract is traded on each
underlying fosters a higher liquidity which in turn facilitates price discovery. At the time of writing, perpetual
futures are particularly popular in cryptocurrency markets (see Figure \ref{figure:BTV perpetual New} for some data on
BTC/USD futures) but we expect that they will soon find traction in other asset classes such as commodities.

In a well-functioning perpetual futures market, the contract's price should closely track the spot price of the
underlying. However, the demand and supply dynamics at play in the market imply that there can be temporary deviations
between the futures price and the spot. These deviations may lead to a \emph{premium} or positive basis when the
futures prices price is higher than the spot, or to a \emph{discount} when the futures price is lower than the spot. In
traditional futures, the existence of a finite maturity forces the futures price to converge to the spot at the
expiry date, and this terminal contraint effectively limits the size of the basis. In a perpetual futures contract
without maturity the anchoring of the futures price to the spot is instead achieved through periodic \emph{funding
payments} from the long to the short. These funding payments are computed periodically (e.g., every 8h on most
trading platforms) as the sum of a \emph{premium term} that depends on the spread between the futures
price and the spot, and of an \emph{interest term} that reflects the interest rate differential between the base and quote currency. See \citet{Medium-1,Medium-2}, \citet{Naka}, \citet{White-Cartoon}, and \citet{Medium-3} for practioner-oriented discussions of perpetual futures and their uses.

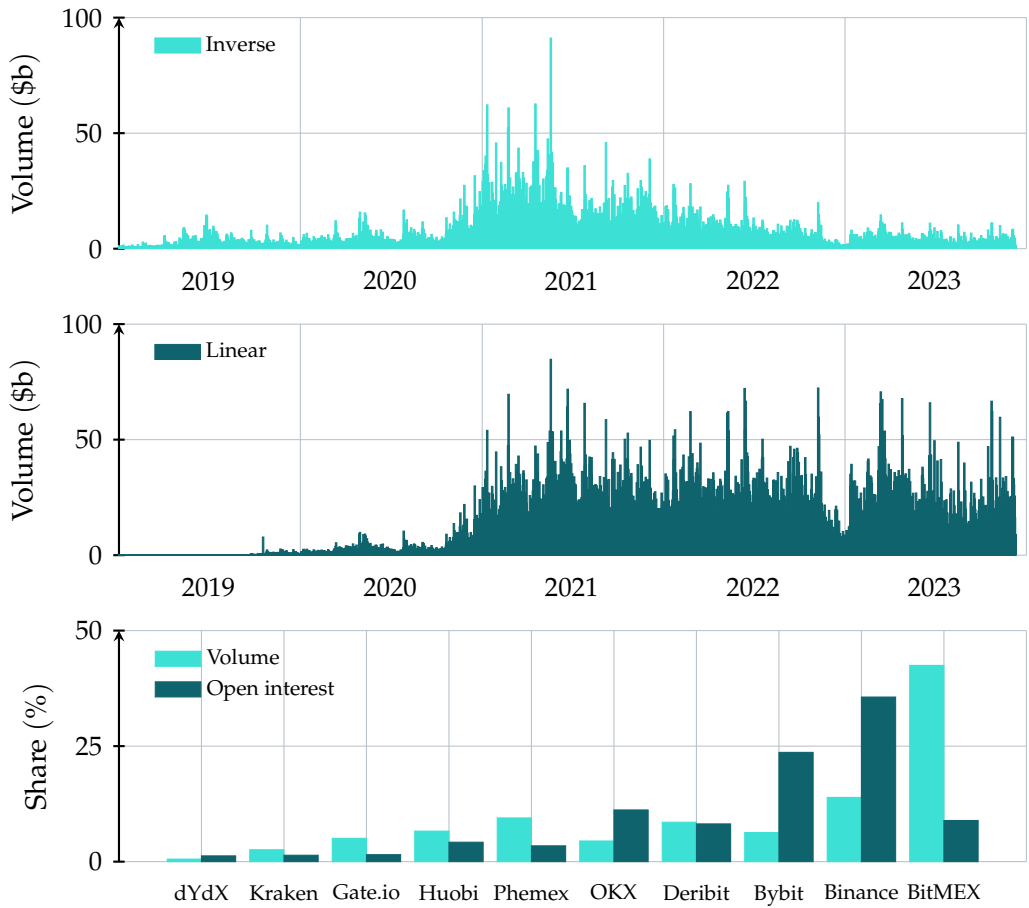
\begin{figure}[t!]
\centering
 
  
  \begin{tikzpicture}
    \begin{groupplot}[
      mlineplot,
      scale only axis,
      axis on top,
      width=0.75\textwidth,
      height=0.13\textheight,	  
      group style={group size= 1 by 3}]
    
  \nextgroupplot[
      mlineplot,
      scale only axis,
      ymax=100,
      ymin=0,
      ytick={0,50,...,100},
      date coordinates in=x,
      ylabel={Volume $(\$\mathrm{b})$},
      table/col sep=comma,
      clip=true,
      ybar,
      bar width=0.02cm,
      xtick style={opacity=0},
      x axis line style={opacity=0},
      date ZERO=2019-01-01,
      xmin=2019-01-01,
      xmax=2024-01-01,
      xtick={
      2019-01-01,
      2020-01-01,
      2021-01-01,
      2022-01-01,
      2023-01-01,
      2024-01-01},
      extra x ticks={
      2019-06-30,
      2020-06-30,
      2021-06-30,
      2022-06-30,
      2023-06-30,
      2024-06-30},
      extra x tick style={	  		  
              grid=none,
              anchor=north},
      extra x tick labels={2019,2020,2021,2022,2023},
      xticklabels={},
      legend pos=north west,
	  axis on top=false]

    \addplot[area legend,draw=aqua2,ultra thin,fill=aqua2]  
    table[x index=0,y expr=\thisrowno{1}/1000000000]{data/perp_volume_margin.csv};

    \legend{Inverse}

    \nextgroupplot[
      ymax=100,
      ymin=0,
      ytick={0,50,...,100},
      date coordinates in=x,
      ylabel={Volume $(\$\mathrm{b})$},
      table/col sep=comma,
      date ZERO=2019-01-01,
      xmin=2019-01-01,
      xmax=2024-01-01,
      xtick={
      2019-01-01,
      2020-01-01,
      2021-01-01,
      2022-01-01,
      2023-01-01,
      2024-01-01},
      extra x ticks={
      2019-06-30,
      2020-06-30,
      2021-06-30,
      2022-06-30,
      2023-06-30,
      2024-06-30},
      xticklabels={},
      clip=true,
      ybar,
      bar width=0.02cm,
      xtick style={opacity=0},
      x axis line style={opacity=0},
      extra x tick style={
        xtick style={opacity=0},
        grid=none,anchor=west},
      extra x tick labels={2019,2020,2021,2022,2023},      
      legend pos=north west,
	  axis on top=false]

    \addplot[area legend,draw=aqua1,fill=aqua1,ultra thin] 
    table[x index=0,y expr=(\thisrowno{2}+0.01)/1000000000]{data/perp_volume_margin.csv};
    
    \legend{Linear}
    
		\nextgroupplot[
			xmin=A,
			xmax=E,
			ymin=0,
			ymax=50,
			table/col sep=comma, 
			bar width=0.45cm,
			xmajorgrids=true, 
			xticklabel style={font=\scriptsize},
			x axis line style={opacity=0},
			ytick={0,25,50},
			yticklabels={0,25,50},
			symbolic x coords={A,dYdX,Kraken,Gate.io,Huobi,Phemex,OKX,Deribit,Bybit,Binance,BitMEX,E},
			xtick={A,dYdX,Kraken,Gate.io,Huobi,Phemex,OKX,Deribit,Bybit,Binance,BitMEX,E},
			xticklabels={,dYdX,Kraken,Gate.io,Huobi,Phemex,OKX,Deribit,Bybit,Binance,BitMEX,},
			ybar=0pt,
			xtick style={opacity=0},
			legend pos=north west,
      	  	ylabel={Share $(\%)$},
	  	  	axis on top=false,
			clip=false]

    \addplot[area legend,fill=aqua2,draw=aqua2] table[x index=0,y expr=100*\thisrowno{1}]{data/market-shares.csv};
    \addplot[area legend,fill=aqua1,draw=aqua1] table[x index=0,y expr=\thisrowno{2}*100]{data/market-shares.csv};
	\legend{Volume,Open interest}
		
    \end{groupplot}
    
		\end{tikzpicture}
\caption{Perpetual Bitcoin futures.}\label{figure:BTV perpetual New}
\fignotes{The top two panels show the evolution of the aggregated daily trading volume in perpetual linear and inverse bitcoin futures from January 1, 2019 to December 10, 2023. The bottom panel shows the average repartition of trading volume and open interest across the ten largest trading platforms over the same period.}
\end{figure}

The terms of a standard, or \emph{linear}, perpetual futures contract include the underlying asset (e.g., BTC/USD) representing the value of one unit of the base currency (BTC) in units of the quote currency (USD), a contract size expressed in units of the base asset (1BTC), and a margin and settlement currency (USD) in which profits and losses are realized. Cryptocurrency trading platforms have introduced multiple variations of the linear contracts. The most important such variation is the \emph{inverse} contract where the base currency itself (BTC) is used as the margin and settlement asset and the contract size is expressed in units of the quote currency (10'000USD). This innovative product allows to speculate on the exchange rate between a crypto and a fiat currency without the need to actually hold units of that fiat currency and was thus widely adapted by trading platforms that typically cannot accept deposits in fiat currencies since they do not qualify as banks under the existing regulation.  
Another important variation is the perpetual \emph{quanto} futures that uses a third currency (ETH), different from the quote and base currencies, for margining and settlement. Perpetual futures contracts where the target spot price is a function of the spot price (rather than the spot price itself) have been proposed by \citet{paradigm2021everlasting} under the somehow misleading name of \emph{everlasting options} but, outside of power contracts \citep{paradigm2021power,Perennial} and their applications \citep{medium2022sqeeth,Clark2,Clark} they have found limited traction so far. 

% paper contribution
In this paper, we derive the prices of different perpetual futures contracts under the absence of arbitrage. We study discrete-time and continuous-time formulations. We identify funding specifications for the linear and inverse contracts that guarantee that the perpetual future price coincides with the spot so that the basis is constantly equal to zero. In both cases, the required interest term is an easily implementable function of the interest rates in the two currencies. With the assumption of constant funding parameters and interest rates, we derive explicit model-free expressions for the linear and inverse futures prices. We show that, in general, the perpetual future price is the discounted expected value of the future underlying asset's price at a random time that reflects the funding specification. In the continuous-time case, we provide a general expression for the quanto futures price and use it to obtain a closed-form solution for the required convexity correction in a Black-Scholes setting. Finally, we derive general pricing formulas for everlasting options that we illustrate with closed form solutions for calls and puts in a Black-Scholes setting.

The first case of futures contracts without maturity can be found at the Chinese Gold and Silver Exchange of Hong Kong
who developed an \emph{undated futures} market. However, these contracts did not allow the futures price to fluctuate
freely and would instead be settled everyday against the spot price with an interest payment analog to the funding
payment. In essence, these undated contracts were automatically rolled over one-day futures contracts, see
\citet{gehr1988undated} for a discussion. Perpetual futures contracts were formally introduced by
\citet{shiller1993measuring} who proposed the creation of perpetual claims on economic indicators (such as real estate
prices and/or corporate profits) where the funding rate would depends on observable cash-flows such as rental rates.
While the focus of \citet{shiller1993measuring} was on real economic indicators rather than cryptocurrencies, his paper
laid the groundwork for the subsequent development of perpetual futures. BitMEX is credited with pioneering and
popularizing perpetual contracts for cryptocurrencies. In particular, that platform introduced the inverse contracts in
2016, see \citet{BitMEX-Intro}. At the time of writing, perpetual futures are listed on dozens of exchanges and
constitute by far the dominant derivatives instrument in that space. For example, of all the listed futures contracts
on Bitcoin traded during the first half of 2023, $75\%$ of the 27B USD daily average volume and $94\%$ of the 8B USD
daily average open interests can be attributed to perpetual futures.

% literature review
The academic literature on the pricing of perpetual futures contracts is very scarce. \citet{angeris2023primer} study a
related but different problem in a no-arbitrage framework. They derive the funding rate value such that the perpetual
future price is explicitly given by a function of the spot price and the unobservable parameters of the prices process.
This type of contract is different from the perpetual future contracts actually traded in cryptocurrency markets, and
is not implemented by any exchange at the time of writing. \citet{he2022fundamentals-original} compute the perpetual
futures price under the assumption of a constant funding rate and perform an empirical study of the deviations between
their theoretical price and market observations \citep[see also][]{java}. The perpetual futures price they derived in
the initial version of their paper was incorrect as it relied on a specification of cash flows that is incompatible
with the assumption that entering the contract is costless (see Appendix \ref{sec:An erroneous cash flow specification}
for a review). After being made aware of this error, \citeauthor{he2022fundamentals-original} amended their
specification and the most recent version of their work \citep{he2022fundamentals} features a correct perpetual
futures pricing formula that is a special case of the result first derived in this paper. To the best of our knowledge
there are currently no other theoretical studies of perpetual inverse and quanto futures contracts. Despite the
scarcity of the theoretical literature there are several empirical studies of perpetual futures focusing on price
discovery \citep{alexander2020bitmex}, crypto carry trades \citep{schmeling2022crypto,Nicholas023}, and the microstructure
of perpetual futures markets \citep{deblasis2022arbitrage}.

% paper structure
The paper is split in two parts. In the first we consider a discrete-time formulation of a market with two currencies and derive explicit expressions for linear and inverse perpetual futures prices. In the second part we move to a continuous-time formulation in which we derive expressions for linear, inverse, and quantos futures prices that we compare with the recent results of \citet{he2022fundamentals} and \citet{angeris2023primer}. Finally, we briefly consider everlasting options and illustrate their pricing within a standard lognormal setting. The proofs of all results are provided in the appendix.

\part{Discrete-time}

\section{The underlying model} % (fold)
\label{sec:the_discrete_time_model}

Time is discrete and indexed by $t\in\{0,1,\dots\}$. Uncertainty in the economy is captured by a probability space $(\Omega,\F,P)$ that we equip with a filtration $\FF=\{\F_t\}_{t\geq 0}$. Unless specified otherwise all stochastic processes to appear in what follows are implicitly assumed to be adapted to $\FF$.

There are two currencies indexed by $i\in\{a,b\}$, for example the US Dollar and Bitcoin. We denote by $x_t>0$ the $b/a$ exchange rate at date $t$, that is the price in $a$ of 1 unit of $b$. Investors can freely exchange currencies at this rate and are allowed to invest in two locally risk free bonds: one denominated in units of $a$ and the other in units of $b$. The price of these assets satisfy
\begin{align}
  B_{it+1}=\left(1+r_{it}\right) B_{it},\qquad B_{i0}=1, \qquad i\in\{a,b\},
\end{align}
where $r_{it}>-1$ is the $\F_{t}-$measurable return on the $i-$denominated risk free asset over the period from date $t$ to date $t+1$.

To ensure the absence of arbitrages between these two primitive assets, we assume that there exists a probability $Q_a$ that is equivalent to $P$ when restricted to $\F_t$ for any fixed $t\in\N$ and such that the price of the $b-$riskless asset expressed in units of $a$ and discounted at the $a-$risk free rate is a martingale under $Q_a$:
\begin{align}\label{eq:NA Qa DT}
E^{Q_a}_t\left[ \frac{B_{bs}}{B_{as}}x_{s}   \right ]=\frac{B_{bt}}{B_{at}}x_t,    \qquad  \forall t\leq s.
\end{align}
Note that the choice of $a$ as the reference currency is without loss. Indeed, using the strictly positive $Q_a-$martingale $\frac{B_{bt}}{B_{at}}\frac{x_t}{x_0}$ as a density process shows that under the above no-arbitrage assumption there exists a probability $Q_b$ that is equivalent to $P$ when restricted to $\F_t$ for any fixed $t$ and such that
\begin{align}\label{eq:NA Qb DT}
E^{Q_b}_t\left[ \frac{B_{as}}{B_{bs}}x^*_{s} \right ]=\frac{B_{at}}{B_{bt}}x^*_t,    \qquad  \forall t\leq s,
\end{align}
where $x^*_t\equiv 1/x_t$ denote the $a/b$ exchange rate. We will have the occasion to use both of these currency-specific pricing measures in what follows.

% section the_discrete_time_model (end)

\section{Perpetual futures pricing} % (fold)
\label{sec:linear_futures_pricing-DT}

A \emph{perpetual linear} (or direct) futures contract provides exposure to one unit of currency $b$ from the point of view of an investor whose unit of account is $a$. Accordingly, the perpetual futures price $f_t$ is quoted in units of currency $a$ and all margining operations required by the contract are carried out in that currency. In principle, $a$ and $b$ can be either fiat or crypto currencies. In practice, however, the reference currency $a$ is most often a pure crypto currency or a stablecoin equivalent of a fiat currency, such as USDT, because exchanges typically do not operate in fiat currencies.

Entering a contract at date $t$ is costless and, as in a classical futures contract, the long receives at date $t+1$ the one period variation 
\begin{align}
  f_{t+1}-f_t
\end{align}
in the futures price. If the futures contract had a finite maturity date, say $T$, then this periodic cash flow and the condition that the futures price should equal the spot at maturity are sufficient to uniquely pin down the futures price as the conditional expectation
\begin{align}\label{eq:finite maturity futures price}
  f^T_t:=E^{Q_a}_t\left[x_{T}\right]
\end{align}
of the terminal spot price under the $a-$risk neutral probability $Q_a$. Fixed maturity contracts on commodities and fiat/fiat currency pairs are widely traded on different platforms, but at the time of writing the only existing such contracts on crypto/fiat currency pairs trade on the Chicago Mercantile Exchange \citep{CME}.

Without a fixed maturity one needs an alternative mechanism to keep the futures price anchored to the underlying spot price. This is achieved by the introduction of a periodic \emph{funding} payment. The exact specification of this funding payment varies across exchanges but generally consists in two predictable components: A \emph{premium} part and an \emph{interest} part. At date $t+1$ the premium part is
\begin{align}\label{eq:premium part}
 \kappa_{t} \left(f_t - x_t\right)
\end{align}
where the rate $\kappa_{t}>0$ controls the strength of the anchoring of the futures price to the spot price. This premium component ensures that the futures prices remains close to the spot price by introducing an automatic correction mechanism that is similar to mean reversion. Intuitively, if the futures price is above the spot price at date $t$ then long investors will have to pay the amount $\kappa_t\left(f_t-x_t\right)>0$ at date $t+1$. This makes the short side more attractive and thus generates an excess demand for short positions which ultimately lowers the futures price. Likewise, if the futures price is lower than the spot price  at date $t$ then holders of long positions know they will receive the amount $-\kappa_t\left(f_t-x_t\right)>0$
at date $t+1$ which makes the long side more attractive and induces the futures price to increase. On the other hand, the interest part of the funding payment is given by 
\begin{align}\label{eq:interest part}
  \iota_t x_t
\end{align}
where the factor $\iota_t$ is set by the exchange to reflect the possibly time varying interest rate differential between the two currencies.

In accordance with these definitions, the $a-$denominated cash flow at date $t+1$ from holding a long position over the period from $t$ to $t+1$ is
\begin{align}\label{eq:CF one period}
\left(f_{t+1}-f_t\right)-\kappa_t \left(f_t - x_t\right)-\iota_t x_t
\end{align}
and, since entering a futures position is costsless, the absence of arbitrage between the futures, spot, and financial markets requires that 
\begin{align}
  E^{Q_a}_t\left[\left(f_{t+1}-f_t\right)-\kappa_t \left(f_t - x_t\right)-\iota_t x_t\right]=0, \qquad \forall t\geq 0.
\end{align} 
Rearranging this equality gives
\begin{align}
  f_t= \frac{1}{1+\kappa_t} E^{Q_a}_{t}\left[  f_{t+1}\right]+\left(\frac{\kappa_t-\iota_t}{1+\kappa_t}\right)  x_t
\end{align}
and iterating this relation forward reveals that
\begin{align}\label{eq:pricing restriction in DT}
  f_t= E^{Q_a}_{t}\left[ \left(\prod_{\tau=t}^{T-1} \frac{1}{1+\kappa_\tau}\right)  f_{T} +\sum_{\sigma=t}^{T-1} \left(\prod_{\tau=t}^{\sigma} \frac{1}{1+\kappa_\tau}\right) \left(\kappa_\sigma-\iota_\sigma\right) x_{\sigma}\right]
\end{align}
for all $0<t\leq T-1$. The solution to this recursive integral equation cannot be uniquely determined without imposing further constraints. In particular, it is easily seen that if $f_t$ is a solution to \eqref{eq:pricing restriction in DT} then 
\begin{align}
	 \hat f_t(\beta)\equiv f_t + \beta \prod_{\tau=1}^{t-1}\left(1+\kappa_\tau\right)
\end{align}
gives another solution for any $\beta\geq 0$, but this solution diverges and, thus, cannot constitute a viable futures price process. To rule out such solutions we impose the \emph{no-bubble} condition:
\begin{align}\label{eq:TVC in DT}
\lim_{T\to\infty}E^{Q_a}_{t}\left[ \left(\prod_{\tau=t}^{T-1} \frac{1}{1+\kappa_\tau}\right)  f_{T}\right]=0, \qquad  \forall t\geq 0. 
\end{align}
Standard arguments then lead to the following results:

\begin{theorem}\label{theorem:perpetual futures price DT-1}
  Assume that 
  \begin{align}
  E^{Q_a}\left[ \sum_{\sigma=0}^{\infty} \left(\prod_{\tau=0}^{\sigma} \frac{1}{1+\kappa_\tau}\right) \left|\kappa_\sigma-\iota_\sigma\right| x_{\sigma}\right]<\infty.\label{eq:integrability DT}
  \end{align}
  Then the process
  \begin{align}\label{eq:price as conditional expectation}
    f_t=E^{Q_a}_{t}\left[ \sum_{\sigma=t}^{\infty} \left(\prod_{\tau=t}^{\sigma} \frac{1}{1+\kappa_\tau}\right)  \left(\kappa_\sigma-\iota_\sigma\right)x_{\sigma}\right]
  \end{align}
   is the unique solution to \eqref{eq:pricing restriction in DT} that satisfies \eqref{eq:TVC in DT}. If, in addition, $-1<\iota_t<\kappa_t$ then this solution can be represented as
	  \begin{align}
	  	f_t= E^{Q_a}_t\left[\left(\prod_{\tau=t}^{\theta_t-1} \frac{1}{1+\iota_\tau}\right) x_{\theta_t}\right]
	  \end{align}
	  where $\theta_t$ is a random time that is distributed according to
	  \begin{align}
		  Q_{a}\left(\theta_t=\sigma\middle| \F\right)
		  =\ind{t\leq \sigma}\frac{\kappa_\sigma-\iota_\sigma}{1+\iota_\sigma}\left(\prod_{\tau=t}^{\sigma} \frac{1+\iota_\tau}{1+\kappa_\tau}\right).
	  \end{align}
    In particular, if the premium rate $\kappa$ is constant and the interest factor $\iota=0$ then the perpetual futures price is given by 
	  \begin{align}
	  	f_t=E^{Q_a}_t\left[\sum_{n=0}^\infty \kappa \left(1+\kappa\right)^{-(n+1)} x_{t+n}  \right] =E^{Q_a}_t\left[x_{t+\theta}\right]
	  \end{align}
	  where $\theta$ is a geometrically distributed random variable with mean $1/\kappa$.
\end{theorem}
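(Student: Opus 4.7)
The proof breaks naturally into three stages, matching the three claims in the theorem. In the first stage I establish existence and uniqueness. The explicit series in \eqref{eq:price as conditional expectation} is well defined and $\F_t$-measurable by Fubini and dominated convergence applied to \eqref{eq:integrability DT}. Splitting off the $\sigma=t$ term shows at once that this process satisfies $(1+\kappa_t) f_t = E^{Q_a}_t[f_{t+1}] + (\kappa_t - \iota_t) x_t$; iterating this one-step identity $T-t$ times recovers \eqref{eq:pricing restriction in DT}, while a tail estimate on the same absolutely convergent expected series delivers \eqref{eq:TVC in DT}. For uniqueness I subtract two candidate solutions: their difference $h_t$ satisfies $h_t = E^{Q_a}_t[(\prod_{\tau=t}^{T-1}(1+\kappa_\tau)^{-1})\,h_T]$ for every $T>t$, so applying \eqref{eq:TVC in DT} to both solutions and letting $T\to\infty$ forces $h_t\equiv 0$.

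In the second stage I derive the random time representation under the assumption $-1<\iota_\tau<\kappa_\tau$. The key algebraic observation is the telescoping identity
\begin{align}
\frac{\kappa_\sigma-\iota_\sigma}{1+\iota_\sigma}\prod_{\tau=t}^{\sigma}\frac{1+\iota_\tau}{1+\kappa_\tau} = \prod_{\tau=t}^{\sigma-1}\frac{1+\iota_\tau}{1+\kappa_\tau} - \prod_{\tau=t}^{\sigma}\frac{1+\iota_\tau}{1+\kappa_\tau}.
\end{align}
Under the standing inequality $\iota_\tau<\kappa_\tau$ the candidate conditional weights $q_\sigma := Q_a(\theta_t=\sigma\mid\F)$ are nonnegative, and summing the telescope over $\sigma\geq t$ gives $1-\prod_{\tau\geq t}(1+\iota_\tau)/(1+\kappa_\tau)$, which equals one whenever the infinite product vanishes. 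A short rearrangement using the same identity shows that $q_\sigma\prod_{\tau=t}^{\sigma-1}(1+\iota_\tau)^{-1}$ coincides with the coefficient of $x_\sigma$ appearing in \eqref{eq:price as conditional expectation}. Treating $\theta_t$ as conditionally independent of the remaining randomness given $\F$ and applying the tower property collapses the resulting sum into $E^{Q_a}_t[(\prod_{\tau=t}^{\theta_t-1}(1+\iota_\tau)^{-1})x_{\theta_t}]$, the claimed representation. The special case $\kappa_\tau\equiv\kappa$, $\iota_\tau\equiv 0$ is then immediate: the conditional law of $\theta_t-t$ reduces to $\kappa(1+\kappa)^{-(n+1)}$, a geometric distribution on $\{0,1,\dots\}$ with mean $1/\kappa$, and substitution into the representation yields the stated closed form.

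The principal obstacle I anticipate is justifying the tail decay $\prod_{\tau=t}^\sigma (1+\iota_\tau)/(1+\kappa_\tau)\to 0$ that underlies the properness of the conditional law of $\theta_t$. Since $\iota_\tau<\kappa_\tau$ alone does not force this (an infinite product of positive factors strictly below one may converge to a strictly positive limit), I would extract the decay from \eqref{eq:integrability DT} by exploiting the strict positivity of $x_\sigma$ and the fact that the coefficients in that summability condition already encode exactly the product structure that must vanish at infinity, or else impose an explicit divergence condition on $\sum_\tau (\kappa_\tau-\iota_\tau)/(1+\iota_\tau)$. Everything else in the argument reduces to routine bookkeeping.
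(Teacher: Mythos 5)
Your proposal is correct and follows essentially the same route as the paper: identify the series as the unique no-bubble solution by passing to the limit in the recursion and invoking dominated convergence under \eqref{eq:integrability DT}, then obtain the random-time representation by summing the conditional law of $\theta_t$ against the $\iota$-discounted spot and noting that the products of $(1+\iota_\tau)$ cancel to reproduce the coefficients in \eqref{eq:price as conditional expectation}. Your closing concern that $\prod_{\tau=t}^{\sigma}(1+\iota_\tau)/(1+\kappa_\tau)$ need not vanish as $\sigma\to\infty$ (so that the conditional law of $\theta_t$ may be a sub-probability) is legitimate but immaterial to the displayed identity, which only uses the values of that law at finite $\sigma$; the paper's own proof does not address this point at all, so you are, if anything, more careful than the source.
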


Like most of our results, Theorem \ref{theorem:perpetual futures price DT-1} requires two related conditions: the no-bubble condition \eqref{eq:TVC in DT} to single out a unique solution to \eqref{eq:pricing restriction in DT} and the integrability condition \eqref{eq:integrability DT} to ensure that this solution is well-defined and not simply $\pm\infty$ at all times. Neither of these conditions can be relaxed if one is to derive a meaningful futures price. Indeed, \eqref{eq:integrability DT} is clearly necessary for a well-defined solution to exist and, under this conditions, the bubble-free solution in \eqref{eq:price as conditional expectation} is the only one that does not diverge. 

In the standard case of a finite maturity contract the futures price is simply the expectation of the terminal spot price under $Q_a$ as in \eqref{eq:finite maturity futures price}. The second part of Theorem \ref{theorem:perpetual futures price DT-1} reveals that a similar representation holds for the perpetual futures price albeit with a random maturity whose distribution reflects the funding parameters of the perpetual contract. In particular, if the premium rate $\kappa$ is constant and $\iota=0$ then the perpetual futures price can be obtained in closed form within any model that admits an explicit expression for the finite maturity futures price $f_t^{t+n}=E^{Q_a}_t[x_{t+n}]$. 

If the funding coefficients and the interest rates are constant then the expectation on the right handside of \eqref{eq:price as conditional expectation} can be computed in closed form without the need to specify the stochastic dynamics of the underlying exchange rate:

\begin{proposition}\label{proposition:perpetual futures price constant parameters DT}
	If $\iota<\kappa$ and $(r_a,r_b)$ are constants such that
    \begin{align}\label{eq:integrability DT-constant}
    	\frac{1}{1+\kappa}\left(\frac{1+r_a}{1+r_b}\right)<1
    \end{align}
    then the perpetual futures price
    \begin{align}
    f_t
	=\frac{\kappa-\iota}{1+\kappa}\, E^{Q_a}_{t}\left[ \sum_{\sigma=t}^{\infty} \left(\frac{1}{1+\kappa}\right)^{\sigma-t}  x_{\sigma}\right]
  	=\frac{\left(\kappa-\iota\right)\left(1+r_b\right)}{r_b-r_a+\kappa\left(1+r_b\right)} x_t\label{eq:price DT-constant}
    \end{align}
is increasing in $r_a$ as well as decreasing in $r_b$ and $\iota$, and converges monotonically to the spot price as the premium rate $\kappa\to\infty$.
\end{proposition}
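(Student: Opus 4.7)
The plan is to substitute constant parameters into the representation \eqref{eq:price as conditional expectation} from Theorem~\ref{theorem:perpetual futures price DT-1} and evaluate the resulting sum in closed form by using the martingale identity \eqref{eq:NA Qa DT} to compute $E^{Q_a}_t[x_\sigma]$.

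First I would observe that with $\kappa$ and $\iota$ constant the product in \eqref{eq:price as conditional expectation} telescopes to $(1+\kappa)^{-(\sigma-t+1)}$, which immediately yields the first equality of \eqref{eq:price DT-constant}. Next, under constant interest rates the martingale condition \eqref{eq:NA Qa DT} gives $E^{Q_a}_t[x_\sigma]=\left(\frac{1+r_a}{1+r_b}\right)^{\sigma-t}x_t$. Since $\kappa>\iota$ and $x_\sigma>0$, Tonelli permits exchanging expectation and sum, reducing the problem to a deterministic geometric series with common ratio $q=\frac{1+r_a}{(1+\kappa)(1+r_b)}$. The hypothesis \eqref{eq:integrability DT-constant} is precisely the requirement $q<1$, so summing yields $1/(1-q)=\frac{(1+\kappa)(1+r_b)}{r_b-r_a+\kappa(1+r_b)}$, and multiplication by $(\kappa-\iota)/(1+\kappa)$ produces the stated closed form.

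The monotonicity claims then follow by direct inspection of this rational function. The parameter $\iota$ appears only in the numerator as $\kappa-\iota$, and $r_a$ only in the denominator with coefficient $-1$, which gives two of the three monotonicities immediately. For the dependence on $r_b$, rewriting the ratio as a function of $u=1+r_b$ in the form $u/((1+\kappa)u-(1+r_a))$ yields a derivative with numerator $-(1+r_a)<0$. For the $\kappa\to\infty$ limit I would use the identity $f_t-x_t=x_t\,\frac{r_a-r_b-\iota(1+r_b)}{r_b-r_a+\kappa(1+r_b)}$: the numerator is independent of $\kappa$, so $f_t\to x_t$, and a short computation shows that $\partial_\kappa(f_t-x_t)$ has sign opposite to $f_t-x_t$ itself, which delivers monotone convergence.

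The only real subtlety is verifying that \eqref{eq:integrability DT-constant} implies the abstract integrability hypothesis \eqref{eq:integrability DT} of Theorem~\ref{theorem:perpetual futures price DT-1}, so that the theorem is actually applicable; this reduces to the same geometric sum computation as above. Everything else is bookkeeping.
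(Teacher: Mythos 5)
Your proposal is correct and follows essentially the same route as the paper's proof: verify that \eqref{eq:integrability DT-constant} delivers the integrability hypothesis \eqref{eq:integrability DT} via the geometric sum, invoke Theorem \ref{theorem:perpetual futures price DT-1}, use the martingale identity \eqref{eq:NA Qa DT} to reduce the expectation to a geometric series, and obtain the comparative statics by differentiation. Your explicit formula $f_t-x_t=x_t\,\frac{r_a-r_b-\iota(1+r_b)}{r_b-r_a+\kappa(1+r_b)}$ for the monotone convergence as $\kappa\to\infty$ is a slightly more detailed justification than the paper offers, but it is the same argument in substance.
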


Condition \eqref{eq:integrability DT-constant} is the counterpart of the integrability condition \eqref{eq:integrability DT} when the interest rate and funding parameters are constant. Indeed, in this case the no-arbitrage restriction \eqref{eq:NA Qa DT} implies that
\begin{align}
  E^{Q_a}\left[  \left(\prod_{\tau=0}^{\sigma} \frac{1}{1+\kappa}\right) \left|\kappa-\iota\right| x_{\sigma}\right]
  % &=\frac{\left|\kappa-\iota\right|E^{Q_a}\left[  x_{\sigma}\right]}{(1+\kappa)^\sigma} \\
  &=\left[\frac{1}{1+\kappa}\left(\frac{1+r_a}{1+r_b}\right)\right]^\sigma \left|\kappa-\iota\right| x_0
\end{align}
and it follows that the sum
\begin{align}
  \sum_{\sigma=0}^T E^{Q_a}\left[  \left(\prod_{\tau=0}^{\sigma} \frac{1}{1+\kappa}\right) \left|\kappa-\iota\right| x_{\sigma}\right]=\left|\kappa-\iota\right| x_0\sum_{\sigma=0}^T\left[\frac{1}{1+\kappa}\left(\frac{1+r_a}{1+r_b}\right)\right]^\sigma 
\end{align}
converges if and only if condition \eqref{eq:integrability DT-constant} is satisfied. On the other hand, the condition that $\kappa-\iota>0$ is required to ensure that the futures price process is positive at all times. To see this, simply note that \eqref{eq:price DT-constant} can be written as
\begin{align}
  f_t=\frac{\left(\kappa-\iota\right)x_t}{\left(1+\kappa\right)(1-\psi)}
\end{align}
where the constant $\psi<1$ is defined by the left side of \eqref{eq:integrability DT-constant}.


\begin{figure}[t!]
	\centering
	    \begin{tikzpicture}
		
         \begin{axis}[
           mlineplot,
	       width=0.6\textwidth,
       	   height=0.6\textwidth,
   	       /pgf/declare function={LF(\k,\ra,\rb)=(\k*(1+\rb))/(\rb-\ra+\k*(1+\rb));},
           ylabel={Futures/spot price: $f_t/x_t$},
		   xlabel={Funding premium: $\kappa$},
		   ymax=1.0005,
		   ymin=0.9995,
		   xmin=0.2,
		   xmax=1,
		   xminorgrids=true,
		   yminorgrids=true,
		   minor tick num=1,
		   clip=true,		   
		   cycle list={
		       {very thick,aqua1,mark=*,mark options={mark size=1.9,thick}},
		       {very thick,aqua2,mark=square*,mark options={mark size=1.9,thick}},
		       {very thick,black,mark=none,mark options={mark size=1.9,thick}},
		       {very thick,aqua2,mark=square*,mark options={mark size=1.9,thick,fill=white}},
			   {very thick,aqua1,mark=*,mark options={mark size=1.9,thick,fill=white}}
		     },
		   legend pos=north east,
		   mark repeat=25,
		   xtick={0.0,0.2,0.4,...,1},
		   ytick={0.9995,1,1.0005},
		   y tick label style={
		   	/pgf/number format/fixed,
		   	/pgf/number format/fixed zerofill, 
			/pgf/number format/precision=4}
		   ]
		   
		   \addplot table[x index=0,y index=1] {data/Essai.csv};\addlegendentry{$(r_a,r_b)=(0.10;0)\Delta$};
		   \addplot table[x index=0,y index=2] {data/Essai.csv};\addlegendentry{$(r_a,r_b)=(0.05;0)\Delta$};
		   \addplot table[x index=0,y index=3] {data/Essai.csv};\addlegendentry{$(r_a,r_b)=(0;0)$};
		   \addplot table[x index=0,y index=4] {data/Essai.csv};\addlegendentry{$(r_a,r_b)=(0,0.05)\Delta$};
		   \addplot table[x index=0,y index=5] {data/Essai.csv};\addlegendentry{$(r_a,r_b)=(0,0.10)\Delta$};

		  % \addplot {LF(\x,0.1/(3*360),0.00)} ;\addlegendentry{$(r_a,r_b)=(0.10;0)\Delta$}
		  % \addplot {LF(\x,0.05/(3*360),0.00)};\addlegendentry{$(r_a,r_b)=(0.05;0)\Delta$}
		  % \addplot {LF(\x,0.00,0.00)}        ;\addlegendentry{$(r_a,r_b)=(0;0)$};
		  % \addplot {LF(\x,0.00,0.05/(3*360))};\addlegendentry{$(r_a,r_b)=(0,0.05)\Delta$};
		  % \addplot {LF(\x,0.00,0.1/(3*360))} ;\addlegendentry{$(r_a,r_b)=(0,0.10)\Delta$};

      \end{axis}
  \end{tikzpicture}
\caption{Perpetual futures price}  
\fignotes{This figure plots the ratio $f_t/x_t$ of the perpetual (linear) futures price to the spot price as a function of the funding premium $\kappa$ for a contract with interest factor $\iota=0$ and different interest rate configurations
$\{r_a,r_b\in \{0,0.05,0.1\}\Delta:r_ar_b=0\}$ where the multiplication by $\Delta=\frac1{3\cdot 360}$ converts annual rates into 8h period rates. 
% The figure starts from the funding premium $\kappa =0.01$ to ensure that the integrability condition \eqref{eq:integrability DT-constant} holds for all values of the interest rates.
}
	\label{fig:futures price}
\end{figure}	

To illustrate the results of Proposition \ref{proposition:perpetual futures price constant parameters DT} we plot in Figure \ref{fig:futures price} the ratio $f_t/x_t$ as a function of the anchoring intensity $\kappa$ for a contract with $\iota\equiv 0$ and different interest rate configurations. Without an interest factor \eqref{eq:price DT-constant} simplifies to
\begin{align}
  f_t/x_t=\frac{\kappa\left(1+r_b\right)}{\kappa\left(1+r_b\right)-\delta}
\end{align}
where $\delta:=r_a-r_b$ denotes the interest rate spread between currencies $a$ and $b$. With this formula at hand, Figure \ref{fig:futures price} shows that the futures price is above the spot if $\delta>0$,  equal to the spot if $\delta=0$, and below the spot when $\delta>0$. The fact that this comparison is driven by the interest rate spread is intuitive. For example, if $\delta<0$ a situation of \emph{contango} arises from the fact that a higher interest on currency $b$ makes it more attractive to hold the currency directly rather than the futures contract which, in turn, tends to lower the futures price. Likewise, if $\delta>0$ a situation of \emph{backwardation} arises from the fact that the lower interest rate on $b$ makes that currency less attractive than the futures contract. Figure \ref{fig:futures price} further shows that the magnitude of the deviation from the spot price is increasing in the magnitude of the spread $|\delta|$ and decreasing in the anchoring intensity $\kappa$ with $\lim_{\kappa\to\infty}|f_t-x_t|=0$.

Next, we show that the interest factor can be chosen by the exchange in such a way that the perpetual futures price and the spot price coincide for any sufficiently large premium rate $\kappa_t$. This result is important from the market design point of view because it delivers a perfect anchoring of the futures price to the spot price through a simple specification of the sole interest factor. It is also important from the financial engineering point of view because if the contract is such that $f_t=x_t$ then the perpetual futures contract can be dynamically replicated by trading in the two primitives assets despite any potential market incompleteness. 

\begin{corollary}\label{corollary:perpetual futures price equalize DT}
If the interest factor
\begin{align}
	\iota_t=\frac{r_{at}-r_{bt}}{1+r_{bt}}<\kappa_t
\end{align}
then the futures price $f_t=x_t$ at all times. In this case, the one period cash flow of a long position can be replicated by borrowing
\begin{align}
	 m_t x_t=\frac{x_t}{1+r_{bt}}
\end{align}
units of $a$ at rate $r_{at}$ and investing $m_t$ units of $b$ at rate $r_{bt}$.
\end{corollary}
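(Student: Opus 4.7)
The plan is two-fold: verify by direct substitution that the candidate $f_t=x_t$ is consistent with the fundamental one-period no-arbitrage restriction, and then exhibit the replicating portfolio.

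For the first claim, I would substitute $f_t=x_t$ into the one-period condition
\begin{align}
E^{Q_a}_t\left[(f_{t+1}-f_t)-\kappa_t(f_t-x_t)-\iota_t x_t\right]=0
\end{align}
underlying the recursion \eqref{eq:pricing restriction in DT}. The premium term vanishes and the equation collapses to $E^{Q_a}_t[x_{t+1}]=(1+\iota_t)\,x_t$. This identity follows from \eqref{eq:NA Qa DT} with $s=t+1$: pulling the $\F_t$-measurable ratio $B_{bt+1}/B_{at+1}=\tfrac{1+r_{bt}}{1+r_{at}}B_{bt}/B_{at}$ out of the conditional expectation gives $E^{Q_a}_t[x_{t+1}]=\tfrac{1+r_{at}}{1+r_{bt}}x_t$, and the prescribed interest factor $\iota_t=(r_{at}-r_{bt})/(1+r_{bt})$ is precisely chosen so that $(1+r_{at})/(1+r_{bt})=1+\iota_t$. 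Since by hypothesis $\iota_t<\kappa_t$, Theorem \ref{theorem:perpetual futures price DT-1} supplies uniqueness of the bubble-free solution, so it remains to verify that $f_t=x_t$ itself satisfies the no-bubble condition \eqref{eq:TVC in DT}. I would handle this by a change of numéraire: using the $Q_a$-martingale $\Lambda_t=B_{bt}x_t/(B_{at}x_0)$ as a density process to pass to $Q_b$, the bubble expectation rewrites as $x_t\,E^{Q_b}_t\bigl[\prod_{\tau=t}^{T-1}(1+\iota_\tau)/(1+\kappa_\tau)\bigr]$, a monotone decreasing sequence in $(0,x_t]$ whose limit is zero under the integrability hypothesis of Theorem \ref{theorem:perpetual futures price DT-1}.

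For the replication, I would verify that the proposed strategy reproduces the futures cash flow \eqref{eq:CF one period} evaluated at $f_t=x_t$. At date $t$ the investor borrows $m_t x_t=x_t/(1+r_{bt})$ units of $a$ and converts the proceeds into $m_t$ units of $b$, which are immediately invested at rate $r_{bt}$; the net initial cost in units of $a$ is zero. At date $t+1$ the loan matures at $m_t x_t(1+r_{at})=\tfrac{1+r_{at}}{1+r_{bt}}x_t=(1+\iota_t)x_t$ units of $a$, while the $b$-position is worth $m_t(1+r_{bt})x_{t+1}=x_{t+1}$ units of $a$. The net one-period cash flow $x_{t+1}-(1+\iota_t)x_t=(x_{t+1}-x_t)-\iota_t x_t$ coincides with \eqref{eq:CF one period} since $f_{t+1}-f_t=x_{t+1}-x_t$ and the premium term is annihilated by $f_t=x_t$.

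The delicate step is the rigorous verification of \eqref{eq:TVC in DT} for the candidate $x_t$, which I expect to discharge via the change-of-numéraire representation sketched above; all remaining steps amount to routine algebraic manipulations of the one-period restriction and of the portfolio's self-financing equation.
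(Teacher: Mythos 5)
Your argument is correct in substance but follows a genuinely different route from the paper's. The paper proceeds constructively: it first verifies the integrability condition \eqref{eq:integrability DT} under the stated interest factor by changing measure to $Q_b$ (the summand then telescopes and the expectation collapses to $x_0$), and then evaluates the closed-form expression \eqref{eq:price as conditional expectation} of Theorem \ref{theorem:perpetual futures price DT-1} by the same change of measure to obtain $f_t=x_t$ directly; the replication claim is established by the same cash-and-carry computation you give (it appears, somewhat misplaced, at the end of the proof of Proposition \ref{proposition:perpetual futures price constant parameters DT}). You instead guess the candidate $x_t$, verify that it solves the one-period restriction via $E^{Q_a}_t[x_{t+1}]=(1+\iota_t)x_t$, and appeal to the uniqueness part of Theorem \ref{theorem:perpetual futures price DT-1}. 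This is cleaner and isolates exactly why the prescribed $\iota_t$ works, at the cost of having to check the theorem's hypotheses for the candidate, and two points there deserve care. First, invoking uniqueness requires \eqref{eq:integrability DT}, which the corollary does not assume; you treat it as given, whereas it must be verified --- your own change-of-num\'eraire identity does this in one line, since the resulting $Q_b$-expectation is bounded by $x_0$. Second, the vanishing of $x_t\,E^{Q_b}_t\bigl[\prod_{\tau=t}^{T-1}\tfrac{1+\iota_\tau}{1+\kappa_\tau}\bigr]$ as $T\to\infty$ does not follow from integrability alone: by monotone convergence it holds if and only if the infinite product $\prod_{\tau\geq t}\tfrac{1+\iota_\tau}{1+\kappa_\tau}$ vanishes almost surely (e.g.\ when $\kappa_\tau-\iota_\tau$ is bounded away from zero). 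The paper relies on exactly the same fact implicitly when it asserts that its telescoping sum equals $x_0$, so on this point your proof is no less rigorous than the original. The replication computation is correct and coincides with the paper's.
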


% In the standard case of a finite maturity contract the futures price is simply given by the risk-neutral expectation of the terminal spot price under $Q_a$ as in \eqref{eq:finite maturity futures price}. Our last result in this section shows that a similar representation holds for the perpetual futures price albeit with a random maturity date whose distribution reflects the funding parameters of the perpetual contract.

% \begin{theorem}\label{theorem:perpetual futures price DT-2}
% Assume that \eqref{eq:integrability DT} holds and that $-1<\iota_t<\kappa_t$. Then the perpetual futures price satisfies
% 	  \begin{align}
% 	  	f_t= E^{Q_a}_t\left[\left(\prod_{\tau=t}^{\theta_t-1} \frac{1}{1+\iota_\tau}\right) x_{\theta_t}\right]
% 	  \end{align}
% 	  where $\theta_t\geq t$ is a random time that is defined on an extension $\widetilde \Omega$ of the probability space and distributed according to
% 	  \begin{align}
% 		  Q_{a}\left(\theta_t=\sigma\middle| \F\right)
% 		  =\ind{t\leq \sigma}\frac{\kappa_\sigma-\iota_\sigma}{1+\iota_\sigma}\left(\prod_{\tau=t}^{\sigma} \frac{1+\iota_\tau}{1+\kappa_\tau}\right).
% 	  \end{align}
%     In particular, if the premium rate $\kappa$ and the interest factor $\iota<\kappa$ are constants then the perpetual futures price is simply given by
% 	  \begin{align}
% 	  	f_t=E^{Q_a}_t\left[\sum_{n=0}^\infty \kappa \left(1+\kappa\right)^{-(n+1)} x_{t+n}  \right] =E^{Q_a}_t\left[x_{t+\eta}\right]
% 	  \end{align}
% 	  where $\eta:\widetilde \Omega\to\N$ is a geometrically distributed random time with mean $1/\kappa$.
% \end{theorem}

We close this section with a remark regarding the robustness of our specification of funding payments.
In line with the interpretation that these payments are like interest payments, our specification in \eqref{eq:premium part} and \eqref{eq:interest part} is fully predictable: The amount that will be paid or received at date $t+1$ is known to all market participants at date $t$. This, however, is not the only possible specification. In particular, some exchanges specify the funding payment as the product of a predictable funding rate and a \emph{mark value} that is approximately equal to the spot price at the end of the funding period.  With such a specification the one period cash flow in \eqref{eq:CF one period} must be replaced by
  \begin{align}\label{eq:CF one period mark-value}
  \left(f_{t+1}-f_t\right)-  x_{t+1} \hat\iota_t-x_{t+1} \hat\kappa_t \left(\frac{f_t - x_t}{x_t}\right)
  \end{align}
  for some \emph{adapted} $(\hat\iota_t,\hat\kappa_t)$ and, since
  \begin{align}
  	E^{Q_a}_t\left[x_{t+1}\hat\iota_t+x_{t+1}\hat\kappa_t \left(\frac{f_t - x_t}{x_t}\right)\right]=\frac{1+r_{at}}{1+r_{bt}}\left(\hat\iota_t x_t+\hat\kappa_t \left(f_t-x_t\right)\right),
  \end{align}
  we have that the associated perpetual futures price can be computed as in Theorem \ref{theorem:perpetual futures price DT-1} but with the adjusted rates
  \begin{align}
  	\left(\iota_t,\kappa_t\right)=\frac{1+r_{at}}{1+r_{bt}}\left(\hat\iota_t,\hat\kappa_t\right).
  \end{align} 
  Subject to this parameter adjustment, all our other results also apply to this alternative specification of funding payments.

% section linear_futures_pricing (end)

\section{Inverse futures pricing} % (fold)
\label{sec:inverse_futures_pricing-DT}

The inverse perpetual futures contract offers an exposure to the $b/a$ exchange rate and is quoted in units of currency $a$ but, unlike the linear contract, it is margined and funded in currency $b$. In this alternative form of contract, $a$ and $b$ can be in principle be fiat or crypto currencies. However, in practice $b$ is most often a crypto currency, such as BTC or ETH, because exchanges typically do not operate in fiat currencies, but $a$ can be of either type as its exchange rate only serves to compute funding payments denominated in units of $b$. As a result, inverse contracts are particularly well suited to crypto currency investors. Indeed, the fact that the inverse contract operates entirely in the target currency implies that it can be run entirely on chain without ever having to own or transfer any units of fiat currency. 

The contract size is expressed in units of $a$ and fixed to one.
As a result, the $b-$denominated cash flow at date $t+1$ from holding a long position in the inverse perpetual futures over the period from date $t$ to date $t+1$ is
\begin{align}
\left(\frac1{f_{It+1}}-\frac1{f_{It}}\right)-\kappa_{It} \left(\frac1{f_{It}} - x^*_t\right)-\iota_{It} x^*_t
\end{align}
where $f_{It}$ denotes the inverse perpetual futures price quoted in units of $a$, $x^*_t=1/x_t$ denotes the price of one unit of $a$ in units of $b$, and $(\iota_{It},\kappa_{It})$ are contract-specific adapted funding parameters set by the exchange. Since entering an inverse futures position is costless, the absence of arbitrage requires that
\begin{align}
E^{Q_b}_t\left[\left(\frac1{f_{It+1}}-\frac1{f_{It}}\right)-\kappa_{It} \left(\frac1{f_{It}} - x^*_t\right)-\iota_{It} x^*_t\right]=0, \qquad \forall t\geq 0,
\end{align}
where $Q_b$ is the pricing measure for $b-$denominated cash flows. Rearranging this identity we find that
\begin{align}
	\frac{1}{f_{It}}=\frac{1}{1+\kappa_{It}}E^{Q_b}_t\left[\frac{1}{f_{It+1}}\right]+\left(\frac{\kappa_{It}-\iota_{It}}{1+\kappa_{It}}\right) x^*_t
\end{align}
and iterating this relation forward reveals that
\begin{align}\label{eq:pricing restriction inverse in DT}
	\frac{1}{f_{It}}= E^{Q_b}_t\left[\left(\prod_{\tau=t}^{T-1}\frac{1}{1+\kappa_{I\tau}}\right)\frac{1}{f_{IT}} +\sum_{\sigma=t}^{T-1}\left(\prod_{\tau=t}^{\sigma}\frac{1}{1+\kappa_{I\tau}}\right)\left(\kappa_{I\sigma}-\iota_{I\sigma}\right) x^*_\sigma  \right].
\end{align}
for all $t+1\leq T$. As in the linear case, we single out a natural solution to this recursive equation by imposing the no-bubble condition
\begin{align}\label{eq:TVC inverse in DT}
\lim_{T\to\infty}E^{Q_b}_{t}\left[ \left(\prod_{\tau=t}^{T-1} \frac{1}{1+\kappa_{I\tau}}\right)  \frac{1}{f_{IT}}\right]=0, \qquad  \forall t\geq 0. 
\end{align}
Arguments similar to those of Section \ref{sec:linear_futures_pricing-DT} then deliver the following counterparts of Theorem \ref{theorem:perpetual futures price DT-1}, Proposition \ref{proposition:perpetual futures price constant parameters DT},  and Corollary \ref{corollary:perpetual futures price equalize DT} for the inverse contract.

\begin{theorem}\label{theorem:inverse future -DT1}
  Assume that 
  \begin{align}
    E^{Q_b}\left[ \sum_{\sigma=0}^{\infty} \left(\prod_{\tau=0}^{\sigma} \frac{1}{1+\kappa_{I\tau}}\right) \left|\kappa_{I\sigma}-\iota_{I\sigma}\right| x^*_{\sigma}\right]<\infty.\label{eq:integrability inverse DT}
  \end{align}
  Then the process
  \begin{align}
    \frac{1}{f_{It}}=E^{Q_b}_{t}\left[ \sum_{\sigma=t}^{\infty} \left(\prod_{\tau=t}^{\sigma} \frac{1}{1+\kappa_{I\tau}}\right)  \left(\kappa_{I\sigma}-\iota_{I\sigma}\right)x^*_{\sigma}\right]
  \end{align}
 is the unique solution to \eqref{eq:pricing restriction inverse in DT} that satisfies \eqref{eq:TVC inverse in DT}. If, in addition, $-1<\iota_t<\kappa_t$ then this solution can be represented as
	  \begin{align}
	  	  	\frac1{f_{It}}= E^{Q_b}_t\left[\left(\prod_{\tau=t}^{\theta_{It}-1} \frac{1}{1+\iota_{I\tau}}\right) x^*_{\theta_{It}}\right]
	  \end{align}
where $\theta_{It}$ is a random time that is distributed according to
	  	  \begin{align}
	  		  Q_{b}\left(\theta_{It}=\sigma\middle|\F\right)
	  		  =\ind{t\leq \sigma}\frac{\kappa_{I\sigma}-\iota_{I\sigma}}{1+\iota_{I\sigma}}\left(\prod_{\tau=t}^{\sigma} \frac{1+\iota_{I\tau}}{1+\kappa_{I\tau}}\right).
	  	  \end{align}
In particular, is constant and the interest factor is zero then the perpetual inverse futures price is simply given by 
\begin{align}
	1/f_{It}=E^{Q_b}_t\left[\sum_{n=0}^{\infty} \kappa_I\left(1+\kappa_I\right)^{-(n+1)}x^*_{t+n}\right]=E^{Q_b}_t[x^*_{t+\eta_I}]
\end{align}
where $\eta_I$ is geometrically distributed random variable with mean $1/\kappa_{I}$.		  
\end{theorem}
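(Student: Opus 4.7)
The key structural observation is that if we set $y_t := 1/f_{It}$, then the recursion \eqref{eq:pricing restriction inverse in DT} and the no-bubble condition \eqref{eq:TVC inverse in DT} for $y_t$ under $Q_b$ are formally identical to the recursion \eqref{eq:pricing restriction in DT} and condition \eqref{eq:TVC in DT} for the linear futures price $f_t$ under $Q_a$, with the exchange rate $x_t$ replaced by its reciprocal $x^*_t$ and the funding parameters $(\kappa_t,\iota_t)$ replaced by $(\kappa_{It},\iota_{It})$. The plan is therefore to transport the proof of Theorem \ref{theorem:perpetual futures price DT-1} to this setting rather than redo it from scratch; the only real work lies in confirming that each step of the translation remains valid.

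For the uniqueness and existence claim, I would start from \eqref{eq:pricing restriction inverse in DT} written at horizon $T$ and pass to the limit $T\to\infty$. The no-bubble condition \eqref{eq:TVC inverse in DT} kills the first term on the right hand side, while the integrability condition \eqref{eq:integrability inverse DT} allows the application of dominated convergence to interchange limit and expectation in the sum, yielding the stated series representation for $1/f_{It}$. Uniqueness is immediate: any two solutions of \eqref{eq:pricing restriction inverse in DT} satisfying \eqref{eq:TVC inverse in DT} must agree because their difference is a $Q_b$-martingale of the form $\prod_{\tau=t}^{T-1}(1+\kappa_{I\tau})^{-1}\Delta_T$ that vanishes in the limit.

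For the random time representation, the main step is to verify that the proposed conditional law of $\theta_{It}$ is a genuine probability distribution on $\{\sigma:\sigma\geq t\}$. Nonnegativity of the weights is immediate from the assumption $-1<\iota_{I\tau}<\kappa_{I\tau}$. To show they sum to one I would write
\begin{align}
\frac{\kappa_{I\sigma}-\iota_{I\sigma}}{1+\iota_{I\sigma}}\prod_{\tau=t}^{\sigma}\frac{1+\iota_{I\tau}}{1+\kappa_{I\tau}}
=\prod_{\tau=t}^{\sigma-1}\frac{1+\iota_{I\tau}}{1+\kappa_{I\tau}}-\prod_{\tau=t}^{\sigma}\frac{1+\iota_{I\tau}}{1+\kappa_{I\tau}},
\end{align}
and then telescope from $\sigma=t$ to $\infty$; the tail term vanishes because $\frac{1+\iota_{I\tau}}{1+\kappa_{I\tau}}<1$ pointwise (a consequence of the integrability assumption once combined with $\kappa_I>\iota_I$, which one verifies from the geometric bound). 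With this in hand, the same algebraic identity rewrites each summand in the series for $1/f_{It}$ as the conditional probability $Q_b(\theta_{It}=\sigma\mid\F)$ multiplied by $\prod_{\tau=t}^{\sigma-1}(1+\iota_{I\tau})^{-1}x^*_\sigma$, so Fubini yields the desired $E^{Q_b}_t$ expression as an expectation over $\theta_{It}$.

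The special case of constant $\kappa_I$ and $\iota_I=0$ is then a direct substitution: the weights reduce to $\kappa_I(1+\kappa_I)^{-(\sigma-t+1)}$, which is the law of $t+\eta_I$ with $\eta_I$ geometric of mean $1/\kappa_I$. I do not anticipate any genuinely hard step; the most delicate piece is checking the telescoping identity together with the vanishing of the tail, since this is where the sign condition $\iota_I<\kappa_I$ and integrability enter in tandem. Everything else is a transcription of the corresponding steps already used for the linear contract.
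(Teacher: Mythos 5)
Your proposal is correct and follows essentially the same route as the paper: the paper proves only the linear-contract version (Theorem \ref{theorem:perpetual futures price DT-1}) in detail and obtains the inverse case by exactly the transcription you describe, working under $Q_b$ with $x^*_t$ and the $I$-subscripted funding parameters, passing to the limit in \eqref{eq:pricing restriction inverse in DT} via \eqref{eq:TVC inverse in DT} and dominated convergence, and matching the series term by term against the conditional law of $\theta_{It}$. The only cosmetic caveat is your side claim that the tail product $\prod_{\tau=t}^{\sigma}\frac{1+\iota_{I\tau}}{1+\kappa_{I\tau}}$ vanishes merely because each factor is less than one (an infinite product of such factors need not tend to zero in general), but this step is not load-bearing: the representation identity rests on the term-by-term rewriting you also provide, which is precisely how the paper argues.
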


\begin{proposition}\label{proposition:perpetual inverse futures price constant parameters DT}
	If $\iota_I<\kappa_{I}$ and $(r_a,r_b)$ are constants such that
	    \begin{align}\label{eq:integrability DT-constant inverse}
	    	\frac{1}{1+\kappa_{I}}\left(\frac{1+r_b}{1+r_a}\right)<1
	    \end{align}
	    then the inverse futures price
	    \begin{align}
	    f_{It}
	  	&=\frac{r_a-r_b+\kappa_{I}\left(1+r_a\right)}{\left(\kappa_{I}-\iota_I\right)\left(1+r_a\right)} x_t
	    \end{align}
	is decreasing in $r_b$ as well as increasing in $r_a$ and $\iota_I$, and converges monotonically to the spot price as the premium rate $\kappa_{I}\to\infty$.
\end{proposition}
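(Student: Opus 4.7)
The plan is to mirror the proof of Proposition \ref{proposition:perpetual futures price constant parameters DT}, with the roles of the two currencies swapped, and then verify the comparative statics by a direct calculation once the closed-form expression is in hand.

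First I would check that the integrability hypothesis of Theorem \ref{theorem:inverse future -DT1} holds under the assumption of the proposition. With constant parameters and $B_{is}=(1+r_i)^s$, the no-arbitrage restriction \eqref{eq:NA Qb DT} yields $E^{Q_b}[x^*_\sigma]=\left(\frac{1+r_b}{1+r_a}\right)^\sigma x^*_0$, so the summand in \eqref{eq:integrability inverse DT} is $\frac{|\kappa_I-\iota_I|}{1+\kappa_I}\left[\frac{1}{1+\kappa_I}\cdot\frac{1+r_b}{1+r_a}\right]^\sigma x^*_0$, whose sum over $\sigma$ converges if and only if \eqref{eq:integrability DT-constant inverse} is satisfied. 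Hence Theorem \ref{theorem:inverse future -DT1} applies and $1/f_{It}$ admits the series representation stated there.

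Next I would evaluate that series. Using the same tower/martingale argument to get $E^{Q_b}_t[x^*_\sigma]=\left(\frac{1+r_b}{1+r_a}\right)^{\sigma-t}x^*_t$ and summing the resulting geometric series with common ratio $\rho=\frac{1}{1+\kappa_I}\cdot\frac{1+r_b}{1+r_a}<1$, I obtain
\begin{align}
\frac{1}{f_{It}}
=\frac{\kappa_I-\iota_I}{1+\kappa_I}\cdot\frac{x^*_t}{1-\rho}
=\frac{(\kappa_I-\iota_I)(1+r_a)}{(1+\kappa_I)(1+r_a)-(1+r_b)}\,x^*_t,
\end{align}
and a direct expansion gives $(1+\kappa_I)(1+r_a)-(1+r_b)=r_a-r_b+\kappa_I(1+r_a)$. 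Inverting and using $x^*_t=1/x_t$ yields the announced closed form for $f_{It}$.

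For the comparative statics I would rewrite
\begin{align}
\frac{f_{It}}{x_t}=\frac{1+\kappa_I}{\kappa_I-\iota_I}-\frac{1+r_b}{(\kappa_I-\iota_I)(1+r_a)}
=1+\frac{1}{\kappa_I-\iota_I}\left[\iota_I+\frac{r_a-r_b}{1+r_a}\right].
\end{align}
The first form makes the monotonicity in $r_b$, $r_a$ and $\iota_I$ transparent: differentiating gives a negative sign for $r_b$, a positive sign for $r_a$, and a positive sign for $\iota_I$ since the bracket $(1+\kappa_I)-\frac{1+r_b}{1+r_a}$ is strictly positive by \eqref{eq:integrability DT-constant inverse}. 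The second form makes the limiting behavior transparent: as $\kappa_I\to\infty$ the prefactor $1/(\kappa_I-\iota_I)$ tends to $0$ monotonically (it is strictly decreasing in $\kappa_I$ for $\kappa_I>\iota_I$), so $f_{It}/x_t\to 1$ monotonically, with approach from above or below according to the sign of the bracketed constant. The main obstacle, such as it is, is purely bookkeeping: making sure the factor of $1+r_a$ is carried correctly through the geometric-series simplification so that the asymmetric roles of $a$ and $b$ in the inverse contract (as compared with the linear case of Proposition \ref{proposition:perpetual futures price constant parameters DT}) are reflected correctly in the final formula.
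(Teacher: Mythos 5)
Your proof is correct and follows exactly the route the paper takes for the analogous linear result (Proposition \ref{proposition:perpetual futures price constant parameters DT}): verify \eqref{eq:integrability inverse DT} via the no-arbitrage relation \eqref{eq:NA Qb DT}, sum the geometric series from Theorem \ref{theorem:inverse future -DT1}, and read off the comparative statics; the paper itself omits the inverse-case proof as "similar," and your write-up is precisely that omitted mirror argument with the currency roles swapped. The algebra, the sign of the bracket guaranteed by \eqref{eq:integrability DT-constant inverse}, and the monotone convergence as $\kappa_I\to\infty$ all check out.
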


\begin{corollary}
If the interest factor
\begin{align}
	\iota_{It}=\frac{r_{bt}-r_{at}}{1+r_{at}}<\kappa_{It}
\end{align}
then the perpetual inverse futures price is equal to the spot price. In this case, the one period cash flow of a long position can be replicated by borrowing 
\begin{align}
  m_{It}x^*_t= \frac{x^*_t}{1+r_{at}}
\end{align}
units of $b$ at rate $r_{bt}$ and investing $m_{It}$ units of $a$ at rate $r_{at}$.
\end{corollary}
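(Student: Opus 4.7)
The plan is to verify that the choice $\iota_{It}=(r_{bt}-r_{at})/(1+r_{at})$ makes $1/f_{It}=x^*_t$ a solution of the one-period pricing recursion
$$
\frac{1}{f_{It}}=\frac{1}{1+\kappa_{It}}E^{Q_b}_t\left[\frac{1}{f_{It+1}}\right]+\frac{\kappa_{It}-\iota_{It}}{1+\kappa_{It}}\, x^*_t,
$$
and then invoke Theorem~\ref{theorem:inverse future -DT1} to conclude uniqueness. The key preliminary step is to compute $E^{Q_b}_t[x^*_{t+1}]$ directly from the no-arbitrage condition \eqref{eq:NA Qb DT}: taking $s=t+1$, pulling the $\F_t$-measurable returns $r_{at},r_{bt}$ outside the conditional expectation, and rearranging gives
$$
E^{Q_b}_t\left[x^*_{t+1}\right]=\frac{1+r_{bt}}{1+r_{at}}\,x^*_t=(1+\iota_{It})\,x^*_t.
$$
Substituting $1/f_{It+1}=x^*_{t+1}$ on the right-hand side of the recursion and using this identity collapses the expression to $x^*_t$, confirming that $f_{It}=x_t$ satisfies the one-period pricing relation.

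To upgrade this from ``a solution'' to ``the futures price'', I will verify the hypotheses of Theorem~\ref{theorem:inverse future -DT1}. Both the integrability condition \eqref{eq:integrability inverse DT} and the no-bubble condition \eqref{eq:TVC inverse in DT} reduce to controlling $E^{Q_b}_t[(\prod_{\tau=t}^{T-1}(1+\kappa_{I\tau})^{-1})\,x^*_T]$. Iterating the identity for $E^{Q_b}_t[x^*_{t+1}]$ derived above shows this quantity equals $x^*_t\prod_{\tau=t}^{T-1}(1+\iota_{I\tau})/(1+\kappa_{I\tau})$, which decays geometrically under the standing hypothesis $\iota_{It}<\kappa_{It}$. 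Theorem~\ref{theorem:inverse future -DT1} then pins down $f_{It}=x_t$ as the unique no-bubble solution.

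For the replication claim, I will compute the one-period cash flow of the self-financing strategy that borrows $m_{It}x^*_t=x^*_t/(1+r_{at})$ units of $b$ at rate $r_{bt}$, converts them at the spot into $m_{It}=1/(1+r_{at})$ units of $a$, and invests those at rate $r_{at}$. At date $t+1$ the $a$-leg matures to one unit of $a$, worth $x^*_{t+1}$ units of $b$, while the $b$-denominated debt grows to $(1+r_{bt})m_{It}x^*_t=(1+\iota_{It})x^*_t$. Netting the two legs produces a $b$-denominated cash flow of $x^*_{t+1}-(1+\iota_{It})x^*_t$. Plugging $1/f_{It}=x^*_t$ and $1/f_{It+1}=x^*_{t+1}$ into the inverse-contract cash flow collapses the premium term $\kappa_{It}(1/f_{It}-x^*_t)$ to zero, leaving exactly the same quantity; the strategy thus replicates the long position period by period.

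None of the steps are individually hard; the mildly delicate point is checking that the geometric decay in the second paragraph goes through when $(\iota_{It},\kappa_{It},r_{at},r_{bt})$ are stochastic, which is exactly what the pointwise inequality $\iota_{It}<\kappa_{It}$ provides after rewriting the product in terms of $(1+\iota_{I\tau})/(1+\kappa_{I\tau})$.
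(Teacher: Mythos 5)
Your argument is correct, but it takes a genuinely different route from the paper's. The paper does not prove the inverse corollary separately: it appeals to the proof of Corollary~\ref{corollary:perpetual futures price equalize DT}, which first verifies the integrability condition and then evaluates the theorem's infinite-series formula head-on, rewriting each summand in terms of the martingale $\tfrac{B_{as}}{B_{bs}}x^*_s$ (i.e.\ changing measure between $Q_b$ and $Q_a$) so that the series telescopes to the spot price. You instead guess the candidate $1/f_{It}=x^*_t$, verify that it solves the one-period recursion using $E^{Q_b}_t[x^*_{t+1}]=(1+\iota_{It})x^*_t$ (which follows correctly from \eqref{eq:NA Qb DT} and the fact that $1+\iota_{It}=\tfrac{1+r_{bt}}{1+r_{at}}$), check the no-bubble and integrability conditions, and then invoke the uniqueness clause of Theorem~\ref{theorem:inverse future -DT1}. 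Both routes hinge on the same one-step identity; yours trades the telescoping of the series for a separate verification of \eqref{eq:TVC inverse in DT}, and it has the advantage of making the ``perfect anchoring'' mechanism transparent (the premium term vanishes identically along the candidate). Your replication computation is also correct and matches the paper's cash-and-carry argument. Two small imprecisions, neither of which is fatal and both of which are shared with the paper's own write-up: with stochastic rates the product $\prod_{\tau=t}^{T-1}(1+\iota_{I\tau})/(1+\kappa_{I\tau})$ cannot be pulled outside $E^{Q_b}_t$ --- the clean statement is that $E^{Q_b}_t[\prod_{\tau=t}^{T-1}(1+\kappa_{I\tau})^{-1}x^*_T]$ equals $\tfrac{B_{bt}}{B_{at}}E^{Q_b}_t[\prod_{\tau=t}^{T-1}\tfrac{1+\iota_{I\tau}}{1+\kappa_{I\tau}}\,M_T]$ with $M_s=\tfrac{B_{as}}{B_{bs}}x^*_s$ a positive $Q_b$-martingale, which is bounded by $x^*_t$; and the pointwise inequality $\iota_{It}<\kappa_{It}$ alone makes each factor of that product strictly less than one but does not by itself deliver geometric decay (that requires $\kappa_{I}-\iota_{I}$ to be suitably bounded away from zero, an assumption the paper also leaves implicit when its telescoping sum is asserted to equal $x_0$ exactly).
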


% \begin{theorem}
% Assume that \eqref{eq:integrability inverse  DT} holds and that $-1<\iota_{It}<\kappa_{It}$. Then
% 	  \begin{align}
% 	  	\frac1{f_{It}}= E^{Q_b}_t\left[\left(\prod_{\tau=t}^{\theta_{It}-1} \frac{1}{1+\iota_{I\tau}}\right) x^*_{\theta_{It}}\right]
% 	  \end{align}
% 	 where $\theta_{It}\geq t$ is a random time that is defined on an extension $\widetilde\Omega$ of the probability space and distributed according to
% 	  \begin{align}
% 		  Q_{b}\left(\theta_{It}=\sigma\middle|\F\right)
% 		  =\ind{t\leq \sigma}\frac{\kappa_{I\sigma}-\iota_{I\sigma}}{1+\iota_{I\sigma}}\left(\prod_{\tau=t}^{\sigma} \frac{1+\iota_{I\tau}}{1+\kappa_{I\tau}}\right).
% 	  \end{align}
% In particular, if the premium rate is constant and the interest factor is zero then the perpetual inverse futures price is given by $1/f_{It}=E^{Q_b}_t[x^*_{t+\eta_I}]$
% where $\eta_I:\widetilde\Omega\to\N$ is geometrically distributed with mean $1/\kappa_{I}$.
% \end{theorem}

\begin{figure}[t!]
	\centering
	    \begin{tikzpicture}
         \begin{axis}[
       mlineplot,
       width=0.6\textwidth,
   	   height=0.6\textwidth,
       /pgf/declare function={LF(\k,\ra,\rb)=(\k*(1+\rb))/(\rb-\ra+\k*(1+\rb));},
       ylabel={Futures/spot price: $f_t/x_t$},
	   xlabel={Funding premium: $\kappa$},
	   ymax=1.0005,
	   ymin=0.9995,
	   xmin=0.2,
	   xmax=1,
	   xminorgrids=true,
	   yminorgrids=true,
	   minor tick num=1,
	   clip=true,		   
	   cycle list={
	       {very thick,aqua1,mark=*,mark options={mark size=1.9,thick}},
	       {very thick,aqua2,mark=square*,mark options={mark size=1.9,thick}},
	       {very thick,black,mark=none,mark options={mark size=1.9,thick}},
	       {very thick,aqua2,mark=square*,mark options={mark size=1.9,thick,fill=white}},
		   {very thick,aqua1,mark=*,mark options={mark size=1.9,thick,fill=white}}
	     },
	   legend pos=north east,
	   mark repeat=25,
	   xtick={0.0,0.2,0.4,...,1},
	   ytick={0.9995,1,1.0005},
	   y tick label style={
	   	/pgf/number format/fixed,
	   	/pgf/number format/fixed zerofill, 
		/pgf/number format/precision=4}
		   ]
		   \addplot table[x index=0,y index=6] {data/Essai.csv};\addlegendentry{$(r_a,r_b)=(0.10;0)\Delta$};
		   \addplot table[x index=0,y index=7] {data/Essai.csv};\addlegendentry{$(r_a,r_b)=(0.05;0)\Delta$};
		   \addplot table[x index=0,y index=8] {data/Essai.csv};\addlegendentry{$(r_a,r_b)=(0;0)$};
		   \addplot table[x index=0,y index=9] {data/Essai.csv};\addlegendentry{$(r_a,r_b)=(0,0.05)\Delta$};
		   \addplot table[x index=0,y index=10] {data/Essai.csv};\addlegendentry{$(r_a,r_b)=(0,0.10)\Delta$};
		   
		  % \addplot {IF(\x/(3*360),0.1/(3*360),0.00)};\addlegendentry{$(r_a,r_b)=(0.1;0)h$};
		  % \addplot {IF(\x/(3*360),0.05/(3*360),0.00)};\addlegendentry{$(r_a,r_b)=(0.05;0)h$};
		  % \addplot {IF(\x/(3*360),0.00,0.00)};\addlegendentry{$(r_a,r_b)=(0;0)$};
		  % \addplot {IF(\x/(3*360),0.00,0.05/(3*360))};\addlegendentry{$(r_a,r_b)=(0,0.05)h$};
		  % \addplot {IF(\x/(3*360),0.00,0.1/(3*360))};\addlegendentry{$(r_a,r_b)=(0,0.1)h$};
		 
      \end{axis}
  \end{tikzpicture}
\caption{Perpetual inverse futures price}  
\fignotes{This figure plots the ratio $f_{It}/x_t$ of the perpetual inverse futures price to the spot price as a function of the funding premium $\kappa$ for a contract with interest factor $\iota=0$ and different interest rate configurations
$\{r_a,r_b\in \{0,0.05,0.1\}\Delta:r_ar_b=0\}$ where the multiplication by $\Delta=\frac1{3\cdot 360}$converts annual rates into 8h period rates. 
% The figure starts from the funding premium $\kappa =0.01$ to ensure that the integrability condition \eqref{eq:integrability DT-constant inverse} holds for all values of the interest rates.
}\label{fig:inverse futures price}	
\end{figure}
		

Comparing the above results to those of Section \ref{sec:linear_futures_pricing-DT} shows that the inverse  price behaves much like the linear price. In particular, Proposition \ref{proposition:perpetual inverse futures price constant parameters DT} and Figure \ref{fig:inverse futures price} confirm that in the benchmark case with constant interest rates and funding parameters the inverse futures price 
\begin{align}
  f_{It}=\left[1+\frac{\delta}{\kappa_I\left(1+r_a\right)}\right]x_t
\end{align}
is above the spot if and only if the interest spread $\delta>0$ and converges monotonically to the spot as either $\kappa_I\to\infty$ or $|\delta|\to 0$. The only qualitative difference between the linear and the inverse price is that the latter is increasing in the interest factor whereas the former is decreasing, but this mechanically results from the fact that the interest factor applies to $x_t$ in the linear case but to $x^*_t$ in the inverse case.  

% section linear_futures_pricing (end)

\part{Continuous-time}

\section{The model} % (fold)
\label{sec:the_continuous_time_model}

Time is continuous and indexed by $t\geq 0$. Uncertainty is represented by a filtered probability space $(\Omega,\F,\FF,P)$ where the filtration is right continuous and such that $\F=\cap_{t\geq 0}\F_t$. Unless specified otherwise all processes to appear in what follows are assumed to be adapted to the filtration $\FF$.

As in discrete-time, there are two currencies $i\in\{a,b\}$ and we denote by $x_t$ the $b/a$ exchange rate at date $t$. Investors can freely exchange currencies at this rate and are allowed to invest in two locally riskless assets: one denominated in units of $a$ and the other in units of $b$. The price of these assets satisfy
\begin{align}
  dB_{it}=r_{it} B_{it} dt,\qquad B_{i0}=1, \qquad i\in\{a,b\}
\end{align}
where $r_{it}>-1$ captures the return on the $i-$denominated asset over an infinitesimal time interval starting at date $t$. To ensure the absence of arbitrages between these primitive assets we assume that there exists a probability $Q_a$ that is equivalent to $P$ when restricted to $\F_t$ for any finite $t$ and such that \eqref{eq:NA Qa DT} holds for all real $t\leq s$; and we note that, as in the discrete-time formulation, this assumption implies the existence of a probability $Q_b$ that is equivalent to $P$ when restricted to $\F_t$ for any finite $t$ and such that \eqref{eq:NA Qb DT} is satisfied for all real $t\leq s$.

% section the_discrete_time_model (end)

\section{Linear futures pricing} % (fold)
\label{sec:linear_futures_pricing}

Let $\tau\leq \sigma$ be stopping times. In view of the single period cash flow in \eqref{eq:CF one period} it is clear that the \emph{cumulative} $a-$discounted cash flows from holding a long position in a perpetual futures contract over  $[\![\tau,\sigma]\!]$ are given by
\begin{align}
& \int_{\tau}^\sigma \frac{dC_t}{B_{at}}=  \int_{\tau}^\sigma \frac{1}{B_{at}}\left(df_t- \left(\iota_t x_t+\kappa_t\left(f_t-x_t\right)\right)dt\right)
\end{align}
where the premium rate $\kappa_t>0$ and the interest factor $\iota_t$ are set by the exchange. 
Since a position in the contract can be opened or closed without cost at any point in time, the absence of arbitrage requires that 
\begin{align}\label{eq:NA perpetual CT}
  E^{Q_a}_\tau  \left[\int_{\tau}^\sigma \frac{dC_t}{B_{at}}\right]=0, \qquad \forall \tau\leq \sigma\in\S
\end{align} 
where $\S$ denotes the set of stopping times of the filtration. Rearranging this equality shows that the process
\begin{align}
 M_t:=\int_{0}^t \frac{dC_s}{B_{as}}
\end{align}
is a martingale under $Q_a$. This in turn implies that $C_t$ is a local martingales under $Q_a$ and substituting the definition of the cumulative cash flow process reveals that the perpetual futures prices evolves according to
\begin{align}
  df_t=B_{at}dM_t+\left(\iota_t x_t+\kappa_t\left(f_t-x_t\right)\right)dt.
\end{align}
Finally, discounting at the premium rate $\kappa_t$ on both sides and integrating the resulting expression shows that
\begin{align}\label{eq:pricing restriction in CT}
  e^{-\int_0^t \kappa_u du}f_t+\int_0^t e^{-\int_0^s \kappa_u du}\left(\iota_s-\kappa_s\right) x_s ds
  \in \Mloc(Q_a)  
  % =f_0+\int_0^t e^{-\int_0^s \kappa_u du} B_{as} dM_s
\end{align}
where $\Mloc(Q_a)$ denotes the set of all $Q_a-$local martingales. As in discrete-time, one cannot pin down a unique solution without imposing additional constraints. In particular, if $f_t$ is a solution to \eqref{eq:pricing restriction in CT} then 
\begin{align}
	\hat f_t(\beta)\equiv f_t+ e^{\int_0^t \kappa_u du} \beta_t
\end{align}
is also a solution for any process $\beta\in\Mloc(Q_a)$. We circumvent this difficulty by imposing a \emph{no-bubble condition} which here takes the form:
\begin{align}\label{eq:TVC in CT}
  \lim_{n\to\infty} E^{Q_a}_{t}\left[e^{-\int_0^{\sigma_n} \kappa_u du}f_{\sigma_n}\right]=0, \qquad \forall t\geq 0\text{ and }(\sigma_n)_{n=1}^\infty\subset\S\text{ s.t. }\sigma_n\uparrow\infty. 
\end{align}
Classical arguments then lead to the following continuous-time versions of Theorem \ref{theorem:perpetual futures price DT-1}, Proposition \ref{proposition:perpetual futures price constant parameters DT}, and Corollary \ref{corollary:perpetual futures price equalize DT}. 

\begin{theorem}\label{theorem:futures price CT-1}
  Assume that 
  \begin{align}
    E^{Q_a}\left[ \int_0^\infty e^{-\int_0^{s} \kappa_u du}\left|\kappa_s-\iota_s\right|x_sds\right]<\infty.\label{eq:integrability CT}
  \end{align}
  Then 
  \begin{align}\label{eq:future price CT}
    f_t=E^{Q_a}_{t}\left[ \int_t^\infty e^{-\int_t^{s} \kappa_u du}\left(\kappa_s-\iota_s\right)x_sds\right]
  \end{align}
   is the unique process that satisfies \eqref{eq:pricing restriction in CT} and \eqref{eq:TVC in CT}. If in addition $\iota_t<\kappa_t$ then this process can be represented as
	  \begin{align}
	  	f_t= E^{Q_a}_t\left[e^{-\int_t^{\theta_t} \iota_u du}  x_{\theta_t}\right]
	  \end{align}
    where $\theta_t$ is distributed according to
    \begin{align}
          Q_{a}\left(\theta_t \in ds\middle| \F\right)
          =\ind{s\geq t}e^{-\int_t^s (\kappa_u-\iota_u)du}\left(\kappa_s-\iota_s\right) ds, \qquad s\geq 0.
    \end{align}
    In particular, if the premium rate $\kappa$ is constant and the interest factor $\iota\equiv 0$ then the perpetual futures price is given by 
	  \begin{align}
	  	f_t=E^{Q_a}_t \left[\int_{t}^{\infty} \kappa e^{-\kappa (s-t) }x_s ds\right]=E^{Q_a}_t\left[x_{t+\tau}\right]
	  \end{align}
	  where $\tau$ is an exponentially distributed random time with mean $1/\kappa$.
\end{theorem}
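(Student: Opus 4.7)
The plan is to follow the four-stage argument of the discrete-time proof of Theorem \ref{theorem:perpetual futures price DT-1}: verify that the candidate \eqref{eq:future price CT} satisfies \eqref{eq:pricing restriction in CT} and the no-bubble condition \eqref{eq:TVC in CT}, establish uniqueness, then derive the random-time representation and its constant-parameter specialization.

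For the existence step, I would introduce the integrable random variable
\begin{align}
  Y:=\int_0^\infty e^{-\int_0^s \kappa_u du}(\kappa_s-\iota_s)x_s\,ds,
\end{align}
whose $Q_a$-integrability is secured by \eqref{eq:integrability CT}, together with the uniformly integrable $Q_a$-martingale $N_t:=E^{Q_a}_t[Y]$. Taking the $\F_t$-measurable discount factor inside the expectation in \eqref{eq:future price CT} and splitting $\int_t^\infty=\int_0^\infty-\int_0^t$ yields the key identity
\begin{align}
  e^{-\int_0^t \kappa_u du}f_t = N_t - \int_0^t e^{-\int_0^s \kappa_u du}(\kappa_s-\iota_s)x_s\,ds,
\end{align}
which directly exhibits the local- (in fact, true-) martingale structure required by \eqref{eq:pricing restriction in CT}. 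The same identity combined with the tower property gives
\begin{align}
  E^{Q_a}_t\bigl[e^{-\int_0^{\sigma_n}\kappa_u du}f_{\sigma_n}\bigr] = E^{Q_a}_t\biggl[\int_{\sigma_n}^\infty e^{-\int_0^s \kappa_u du}(\kappa_s-\iota_s)x_s\,ds\biggr],
\end{align}
and a dominated convergence argument with dominating variable $\int_0^\infty e^{-\int_0^s \kappa_u du}|\kappa_s-\iota_s|x_s\,ds$ forces the right side to vanish along any $\sigma_n\uparrow\infty$, yielding \eqref{eq:TVC in CT}.

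For uniqueness, I compare two solutions $f$ and $g$: the difference $L_t:=e^{-\int_0^t\kappa_u du}(f_t-g_t)$ is a $Q_a$-local martingale by \eqref{eq:pricing restriction in CT}, so along some localizing sequence $\sigma_n\uparrow\infty$ the stopped process $L_{t\wedge\sigma_n}$ is a true martingale and optional stopping yields $L_t = E^{Q_a}_t[L_{\sigma_n}]$ for every $n$ with $\sigma_n>t$; applying \eqref{eq:TVC in CT} to $f$ and $g$ separately shows that $\lim_n E^{Q_a}_t[L_{\sigma_n}]=0$, whence $L_t=0$ and $f_t=g_t$. For the random-time representation, the assumption $\iota_t<\kappa_t$ makes the kernel $\ind{s\geq t}e^{-\int_t^s(\kappa_u-\iota_u)du}(\kappa_s-\iota_s)$ nonnegative, and the substitution $v=\int_t^s(\kappa_u-\iota_u)du$ shows that it integrates to one in $s$ under the implicit requirement $\int_t^\infty(\kappa_u-\iota_u)du=+\infty$, which must hold for $f_t$ to coincide with a genuine discounted expectation. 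A Fubini argument combined with the identity $\int_t^s\iota_u du+\int_t^s(\kappa_u-\iota_u)du=\int_t^s\kappa_u du$ then gives
\begin{align}
  E^{Q_a}_t\bigl[e^{-\int_t^{\theta_t}\iota_u du}x_{\theta_t}\bigr] = E^{Q_a}_t\biggl[\int_t^\infty e^{-\int_t^s\kappa_u du}(\kappa_s-\iota_s)x_s\,ds\biggr] = f_t,
\end{align}
and the constant-parameter specialization is immediate since with $\kappa$ constant and $\iota\equiv 0$ the kernel collapses to $\kappa e^{-\kappa(s-t)}\ind{s\geq t}$, the density of $t+\tau$ with $\tau$ exponentially distributed with mean $1/\kappa$.

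The main obstacle I anticipate is the uniqueness step, because one must ensure that a single localizing sequence simultaneously turns $L_t$ into a true martingale and carries the no-bubble condition \eqref{eq:TVC in CT} to the limit; this is routine but requires care since the no-bubble condition alone does not automatically guarantee uniform integrability of the candidate on any specific stopping sequence. A secondary but conceptually important point is the implicit requirement $\int_t^\infty(\kappa_u-\iota_u)du=+\infty$ almost surely under $Q_a$, without which $\theta_t$ would carry an atom at infinity and the discounted spot-at-random-time formula would need to be adjusted by the contribution of that atom.
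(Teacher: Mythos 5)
Your proposal is correct and follows essentially the same route as the paper's proof: localize the local martingale in \eqref{eq:pricing restriction in CT} along stopping times $t\leq\sigma_n\uparrow\infty$, kill the boundary term with the no-bubble condition \eqref{eq:TVC in CT}, pass to the limit by dominated convergence using \eqref{eq:integrability CT}, and obtain the random-time representation by a Fubini computation against the stated density. The only differences are that you additionally verify existence explicitly (the paper leaves the check that the candidate satisfies both conditions implicit) and that you flag the implicit requirement $\int_t^\infty(\kappa_u-\iota_u)\,du=\infty$ for $\theta_t$ to be non-defective, a point the paper glosses over; neither changes the substance of the argument.
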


\begin{proposition}\label{proposition:constant coefficients CT}
	If $\iota<\kappa$ and $(r_a,r_b)$ are constants such that $\kappa+r_b-r_a>0$ then
    \begin{align}\label{eq:price CT-constant}
    f_t=\frac{\kappa-\iota}{\kappa+r_b-r_a} x_t
    \end{align}
is increasing in $r_a$ as well as decreasing in $r_b$ and $\iota$, and converges monotonically to the spot price as the premium rate $\kappa\to\infty$.
\end{proposition}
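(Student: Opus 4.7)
The plan is to apply Theorem \ref{theorem:futures price CT-1} directly and evaluate the expectation in \eqref{eq:future price CT} in closed form. The key observation is that with constant interest rates, the no-arbitrage condition \eqref{eq:NA Qa DT} combined with $B_{it}=e^{r_i t}$ yields the one-line identity
\begin{align}
E^{Q_a}_t[x_s]=e^{(r_a-r_b)(s-t)}\,x_t,\qquad t\le s,
\end{align}
which reduces the perpetual futures price to a deterministic integral of the current spot.

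First I would verify the integrability hypothesis \eqref{eq:integrability CT}. Using Tonelli and the identity above, the expectation reduces to $|\kappa-\iota|\,x_0\int_0^\infty e^{-(\kappa+r_b-r_a)s}ds$, which is finite precisely when $\kappa+r_b-r_a>0$. This both validates the application of Theorem \ref{theorem:futures price CT-1} and pinpoints \eqref{eq:integrability CT} as the continuous-time counterpart of \eqref{eq:integrability DT-constant}. Next I would substitute into \eqref{eq:future price CT}, pull the constants $\kappa-\iota$ and $x_t$ out of the integral after the change of variable $u=s-t$, and compute
\begin{align}
f_t=(\kappa-\iota)\,x_t\int_0^\infty e^{-(\kappa+r_b-r_a)u}\,du=\frac{\kappa-\iota}{\kappa+r_b-r_a}\,x_t,
\end{align}
which is the announced closed-form expression \eqref{eq:price CT-constant}.

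For the comparative statics, I would differentiate the ratio $f_t/x_t$ with respect to each parameter. The signs of $\partial_{r_a}f_t=\frac{\kappa-\iota}{(\kappa+r_b-r_a)^2}x_t$, $\partial_{r_b}f_t=-\partial_{r_a}f_t$, and $\partial_\iota f_t=-x_t/(\kappa+r_b-r_a)$ are all determined by the sign of $\kappa-\iota$, which is positive by assumption. For the limit, a direct calculation gives
\begin{align}
\lim_{\kappa\to\infty}\frac{\kappa-\iota}{\kappa+r_b-r_a}=1,
\end{align}
while
\begin{align}
\frac{\partial}{\partial \kappa}\left(\frac{\kappa-\iota}{\kappa+r_b-r_a}\right)=\frac{\iota-(r_a-r_b)}{(\kappa+r_b-r_a)^2}
\end{align}
has a constant sign in $\kappa$, so the convergence is indeed monotone (from below when $\iota>r_a-r_b$ and from above when $\iota<r_a-r_b$; the degenerate case $\iota=r_a-r_b$ corresponds to the continuous-time analogue of Corollary \ref{corollary:perpetual futures price equalize DT}).

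There is no real obstacle here: once Theorem \ref{theorem:futures price CT-1} and the no-arbitrage identity for $x_s$ are in hand, the result follows from a single Gaussian-free exponential integral and elementary differentiation. The only point worth emphasizing in the write-up is the equivalence between the integrability condition \eqref{eq:integrability CT} and the finiteness constraint $\kappa+r_b-r_a>0$, since this is what ensures the integral defining $f_t$ converges and that the expression \eqref{eq:price CT-constant} is well-posed.
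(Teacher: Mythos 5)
Your proposal is correct and follows essentially the same route as the paper's proof: verify \eqref{eq:integrability CT}, apply Theorem \ref{theorem:futures price CT-1}, and evaluate the resulting exponential integral. The only cosmetic difference is that the paper implements the identity $E^{Q_a}_t[x_s]=e^{(r_a-r_b)(s-t)}x_t$ via an explicit change of measure to $Q_b$, whereas you use the $Q_a$-martingale property of $B_{bt}x_t/B_{at}$ directly; the computation (and the resulting integral) is identical, and your explicit sign analysis of $\partial_\kappa(f_t/x_t)$ even makes the monotone-convergence claim slightly more detailed than the paper's.
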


\begin{corollary}\label{corollary:equalizing CT}
Assume that 
\begin{align}
	\iota_t\equiv r_{at}-r_{bt}< \kappa_t
\end{align}
then the perpetual futures price is equal to the spot price at all times and can be dynamically replicated by trading in the two riskless assets.
\end{corollary}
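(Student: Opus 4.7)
The plan is to verify that the pricing formula \eqref{eq:future price CT} of Theorem \ref{theorem:futures price CT-1} evaluates to $x_t$ under the posited interest factor, and then exhibit an explicit replicating portfolio in the two money market accounts.

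For the pricing claim, the key observation is that when $\iota_t = r_{at} - r_{bt}$, the process
\begin{align}
Z_s := e^{-\int_t^s \iota_u du}\, x_s = \frac{B_{at}}{B_{bt}}\,\frac{B_{bs} x_s}{B_{as}}
\end{align}
is a $Q_a$-martingale starting at $Z_t = x_t$, thanks to the no-arbitrage condition \eqref{eq:NA Qa DT} and the finite variation of the money-market accounts. Setting $a_s := e^{-\int_t^s (\kappa_u - \iota_u)du}$, which by the assumption $\kappa_t > \iota_t$ is a non-increasing, predictable, finite-variation process with $a_t = 1$, the formula \eqref{eq:future price CT} rewrites as
\begin{align}
f_t = E^{Q_a}_t\!\left[\int_t^\infty Z_s (\kappa_s - \iota_s)\, a_s\, ds\right] = -E^{Q_a}_t\!\left[\int_t^\infty Z_s\, da_s\right].
\end{align}
A Stieltjes integration by parts reduces the right-hand side to $Z_t a_t - E^{Q_a}_t[\lim_{s\to\infty} Z_s a_s] + E^{Q_a}_t[\int_t^\infty a_s dZ_s]$. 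The first piece is $x_t$; the stochastic integral has zero mean since $Z$ is a martingale and $a_s$ is bounded; and the boundary term vanishes by the integrability condition \eqref{eq:integrability CT} combined with the damping of $a_s$ to zero. This yields $f_t = x_t$.

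For the replication claim, I would consider the strategy that at each instant $t$ borrows $x_t$ units of currency $a$ at rate $r_{at}$ and simultaneously invests one unit of currency $b$ at rate $r_{bt}$. The position has zero net value — so it is costless to enter — and over the infinitesimal period $[t, t+dt]$ the $b$-investment gains $dx_t + r_{bt} x_t\, dt$ in units of $a$ while the $a$-loan accrues interest $r_{at} x_t\, dt$. The net cash flow is
\begin{align}
dx_t - (r_{at} - r_{bt}) x_t\, dt = dx_t - \iota_t x_t\, dt,
\end{align}
which coincides exactly with the long futures cash flow $dC_t$ evaluated at $f_t = x_t$, so the position pathwise replicates the contract.

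The main delicate step is the boundary term in the integration by parts, namely $E^{Q_a}_t[Z_s a_s] \to 0$ as $s \to \infty$. This is where the strict inequality $\iota_t < \kappa_t$ really enters: it ensures the damping factor $a_s$ decays to zero, and combined with the martingale property of $Z$ and the standing integrability assumption it forces the boundary contribution to vanish. Everything else is a direct continuous-time transcription of the discrete-time argument behind Corollary \ref{corollary:perpetual futures price equalize DT}.
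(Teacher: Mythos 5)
Your proof is correct, and the pricing half takes a genuinely different route from the paper's. You stay under $Q_a$ throughout: you note that $Z_s=e^{-\int_t^s\iota_u du}x_s=\tfrac{B_{at}}{B_{bt}}\tfrac{B_{bs}x_s}{B_{as}}$ is a $Q_a$-martingale by \eqref{eq:NA Qa DT}, rewrite the pricing integral as $-E^{Q_a}_t\left[\int_t^\infty Z_s\,da_s\right]$ with $a_s=e^{-\int_t^s(\kappa_u-\iota_u)du}$, and integrate by parts. The paper uses the very same process $Z$ but as a change-of-measure density: it passes to $Q_b$ via \eqref{eq:relation Qa/Qb in CT} and reduces the expectation to the deterministic integral $-x_t\int_t^\infty d\bigl(e^{-\int_t^s(\kappa_u-\iota_u)du}\bigr)=x_t$. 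The two computations are two sides of the same coin, but the paper's version buys something concrete: under $Q_b$ the boundary term becomes $x_t\,E^{Q_b}_t[a_T]\to 0$ by bounded convergence (granting, as the paper also implicitly does, that $\int^\infty(\kappa_u-\iota_u)\,du=\infty$), and there is no infinite-horizon stochastic integral whose vanishing mean needs justifying --- your assertion that $\int_t^\infty a_s\,dZ_s$ has zero expectation merely because $a$ is bounded and $Z$ is a martingale is not automatic and would need an extra line (e.g., the integral is bounded below by $-Z_t$, hence a supermartingale, and one then squeezes). A second small omission: the corollary's hypotheses do not include \eqref{eq:integrability CT}, so it must be verified rather than invoked as standing; the paper does this explicitly, and in your setup it follows for free by monotone convergence since the integrand $-Z_s\,da_s$ is nonnegative and its total conditional expectation is computed to be $x_t<\infty$. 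The replication half is essentially the paper's argument with the same cash-and-carry portfolio (borrow $x_t$ units of $a$, hold one unit of $b$); you match the margin flow directly, while the paper equivalently checks that the self-financing portfolio long the contract and short this position has identically zero value.
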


\noindent
While the proofs are slightly more involved in continuous-time, we stress that the results and their implications are essentially the same in both frameworks. The only difference is that $(r_a,r_b,\iota)$ are simply compounded period rates in discrete-time but continuously compounded annual rates in continuous-time.
%

% section linear_futures_pricing (end)

\section{Inverse and quanto futures pricing} % (fold)
\label{sec:inverse_futures_pricing}

\subsection{Perpetual inverse futures} % (fold)
\label{sub:perpetual_inverse_futures}

Let now $x^*_t:=1/x_t$ denote the price of one unit of currency $a$ in units of currency $b$ and recall that we denote by $Q_b$ the pricing measure for $b-$denominated amounts. Proceeding as in the case of the linear contract shows that the arbitrage restriction for the inverse contract is given by
\begin{align}
  E^{Q_b}_\tau  \left[\int_{\tau}^\sigma \frac{dC_{it}}{B_{bt}}\right]=0, \qquad \forall \tau\leq \sigma\in\S,
\end{align} 
where the incremental cash flow
\begin{align}
  dC_{it}=d \left(\frac{1}{f_{It}}\right)-\kappa_{It}\left(\frac{1}{f_{It}}-x^*_t\right)dt-\iota_{It} x^*_t dt
\end{align}
is denominated in units of $b$ and the parameters $(\iota_{It},\kappa_{It})$ are set by the exchange subject to $\kappa_{It}>0$. The same arguments as in the linear case then imply that
\begin{align}\label{eq:pricing restriction inverse in CT}
  e^{-\int_0^t \kappa_{Iu} du}\frac{1}{f_{It}}+\int_0^t e^{-\int_0^s \kappa_{Iu} du}\left(\iota_{Iu}-\kappa_{Iu}\right) x^*_s ds
  \in  \Mloc(Q_b)
  % =\frac{1}{f_{I0}}+\int_0^t{e^{-\int_0^s \kappa_{Iu} du} B_{bs}dM_{is}}
\end{align}
and imposing the no-bubble condition
\begin{align}\label{eq:TVC inverse in CT}
  \lim_{n\to\infty} E^{Q_b}_{t}\left[e^{-\int_0^{\sigma_n} \kappa_{Iu} du}\frac{1}{f_{I\sigma_n}}\right]=0, \qquad \forall t\geq 0\text{ and }(\sigma_n)_{n=1}^\infty\subset\S\text{ s.t. }\sigma_n\uparrow\infty
\end{align}
allows to uniquely determine the inverse futures price. In particular, we have the following counterparts to Theorem \ref{theorem:futures price CT-1}, Proposition \ref{proposition:constant coefficients CT}, and Corollary \ref{corollary:equalizing CT}.

\begin{theorem}
  Assume that 
  \begin{align}
    E^{Q_b}\left[ \int_0^\infty e^{-\int_0^{s} \kappa_{Iu} du}\left|\kappa_{Is}-\iota_{Is}\right|x^*_sds\right]<\infty.\label{eq:integrability inverse CT}
  \end{align}
  Then the process
  \begin{align}
    \frac{1}{f_{It}}=E^{Q_b}_{t}\left[ \int_t^\infty e^{-\int_t^{s} \kappa_{Iu} du}\left(\kappa_{Is}-\iota_{Is}\right)x^*_sds\right]
  \end{align}
   is the unique solution to \eqref{eq:pricing restriction inverse in CT} that satisfies \eqref{eq:TVC in CT}. If in addition $\iota_{It}<\kappa_{It}$ then this solution can be represented as	  
   \begin{align}
	  	\frac1{f_{It}}= E^{Q_b}_t\left[e^{-\int_t^{\theta_{It}} \iota_{Iu} du}  x^*_{\theta_{It}}\right]
   \end{align}
	  where $\theta_{It}$ is distributed according to
	  \begin{align}
		  Q_{b}\left(\theta_{It}\in ds\middle| \F\right)
		  =\ind{t\leq s}e^{-\int_t^s (\kappa_{Iu}-\iota_{Iu})du}\left(\kappa_{Is}-\iota_{Is}\right)ds,\qquad s\geq 0.
	  \end{align}
    In particular, if the premium rate $\kappa_{I}$ is constant and the interest factor $\iota_I\equiv 0$ then the perpetual inverse futures price is given by 
	  \begin{align}
	  	\frac1{f_{It}}=E^{Q_b}_t \left[\int_t^{\infty} \kappa_{I} e^{-\kappa_{I} (s-t) }x^*_s ds\right]=E^{Q_b}_t\left[x^*_{t+\tau_I}\right]
	  \end{align}
	  where $\tau_I$ is an exponentially distributed random time with mean $1/\kappa_{I}$.
\end{theorem}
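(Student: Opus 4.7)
The plan is to adapt the proof of Theorem~\ref{theorem:futures price CT-1} by substituting the tuple $(1/f_{It},x^*_t,Q_b,\kappa_{It},\iota_{It})$ for $(f_t,x_t,Q_a,\kappa_t,\iota_t)$, since the pricing restriction \eqref{eq:pricing restriction inverse in CT} and the no-bubble condition \eqref{eq:TVC inverse in CT} have the same structural form as in the linear case. Explicitly, \eqref{eq:pricing restriction inverse in CT} is equivalent to the existence of $M\in\Mloc(Q_b)$ with
\begin{align}
  \frac{e^{-\int_0^t \kappa_{Iu}du}}{f_{It}} = M_t + \int_0^t e^{-\int_0^s \kappa_{Iu}du}(\kappa_{Is}-\iota_{Is})\,x^*_s\, ds,
\end{align}
and \eqref{eq:integrability inverse CT} ensures that the integral term converges absolutely as $t\to\infty$ and is dominated, uniformly in $t$, by an integrable random variable.

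I would next localize $M$ along a sequence $\sigma_n\uparrow\infty$, apply optional sampling at $t\wedge\sigma_n$, take $\F_t$-conditional expectations, and pass to the limit $n\to\infty$. Conditional dominated convergence handles the integral term while the no-bubble condition eliminates the terminal contribution $e^{-\int_0^{\sigma_n}\kappa_{Iu}du}/f_{I\sigma_n}$, and rearranging yields the stated expression for $1/f_{It}$. Uniqueness then comes essentially for free: the difference of any two solutions of \eqref{eq:pricing restriction inverse in CT} obeying the no-bubble condition is a local martingale whose terminal value vanishes in the limit, and is therefore identically zero.

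For the probabilistic representation, under the sign condition $\iota_{It}<\kappa_{It}$ the map
\begin{align}
  q_t(s) := \ind{s\geq t}\, e^{-\int_t^s(\kappa_{Iu}-\iota_{Iu})du}(\kappa_{Is}-\iota_{Is})
\end{align}
is nonnegative on $[t,\infty)$ and, by a direct antiderivative computation, integrates to one (at least when $\int_t^\infty(\kappa_{Iu}-\iota_{Iu})du=\infty$ almost surely, which is implicit in the integrability assumption). Conditioning on $\F$ inside the outer expectation then gives
\begin{align}
  E^{Q_b}_t\left[e^{-\int_t^{\theta_{It}}\iota_{Iu}du}\,x^*_{\theta_{It}}\right] = E^{Q_b}_t\left[\int_t^\infty e^{-\int_t^s \iota_{Iu}du}\,x^*_s\, q_t(s)\, ds\right],
\end{align}
and merging the two exponentials recovers the main formula. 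The constant-coefficient special case is immediate: $q_t(s)=\kappa_I\, e^{-\kappa_I(s-t)}$, so $\theta_{It}-t$ is exponential with mean $1/\kappa_I$.

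The main technical obstacle is the continuous-time localization step: one needs to verify that \eqref{eq:integrability inverse CT} supplies a dominating variable that is uniform in $n$, so that conditional dominated convergence applies and the no-bubble condition can be invoked to kill the terminal term in the limit. Once that is settled, everything else is a translation of the linear-case proof and reduces to elementary integration.
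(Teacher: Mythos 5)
Your proposal is correct and follows essentially the same route as the paper, which explicitly handles the inverse contract by noting that the proof is identical to that of Theorem \ref{theorem:futures price CT-1} after the substitution $(1/f_{It},x^*_t,Q_b,\kappa_{It},\iota_{It})$ for $(f_t,x_t,Q_a,\kappa_t,\iota_t)$: localization and optional sampling, dominated convergence via \eqref{eq:integrability inverse CT}, the no-bubble condition to kill the terminal term, and conditioning on $\F$ for the random-time representation. Your side remark that the conditional density integrates to one only when $\int_t^\infty(\kappa_{Iu}-\iota_{Iu})\,du=\infty$ is a legitimate (if minor) point the paper leaves implicit.
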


\begin{proposition}
	If $\iota_I<\kappa_{I}$ and $(r_a,r_b)$ are constants such that $\kappa_{I}+r_a-r_b>0$
  then the perpetual inverse futures price
    \begin{align}
    f_{It}=\frac{\kappa_{I}+r_a-r_b}{\kappa_{I}-\iota_I}x_t
    \end{align}
is decreasing in $r_b$ as well as increasing in $r_a$ and $\iota$, and converges monotonically to the spot price as the premium rate $\kappa_{I}\to\infty$.
\end{proposition}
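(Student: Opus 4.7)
The plan is to specialize the pricing formula from the preceding theorem to the case of constant coefficients and then compute the conditional expectation in closed form using the no-arbitrage restriction on $x^*$. Under the stated hypotheses the pricing formula reduces to
\begin{align}
\frac{1}{f_{It}} = (\kappa_I - \iota_I)\, E^{Q_b}_t\left[\int_t^\infty e^{-\kappa_I(s-t)}\, x^*_s\, ds\right].
\end{align}
The key preliminary computation is $E^{Q_b}_t[x^*_s]$. The continuous-time analogue of the no-arbitrage restriction \eqref{eq:NA Qb DT} with constant rates gives $E^{Q_b}_t\bigl[e^{(r_a-r_b)s} x^*_s\bigr] = e^{(r_a-r_b)t} x^*_t$, and therefore
\begin{align}
E^{Q_b}_t[x^*_s] = e^{(r_b - r_a)(s-t)} x^*_t.
\end{align}

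Next, I would invoke Fubini's theorem to interchange the conditional expectation with the time integral. The hypothesis $\kappa_I + r_a - r_b > 0$ is precisely what makes the integrand $e^{-\kappa_I(s-t)} E^{Q_b}_t[x^*_s] = x^*_t\, e^{-(\kappa_I + r_a - r_b)(s-t)}$ integrable on $[t,\infty)$, and it simultaneously verifies the integrability condition \eqref{eq:integrability inverse CT} imposed in the preceding theorem. After the Fubini exchange, evaluating the resulting elementary exponential integral gives
\begin{align}
\frac{1}{f_{It}} = \frac{(\kappa_I - \iota_I)\, x^*_t}{\kappa_I + r_a - r_b},
\end{align}
and inverting both sides, using $x^*_t = 1/x_t$ and the assumption $\kappa_I > \iota_I$ to guarantee strict positivity, yields the stated closed-form expression for $f_{It}$.

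Finally, the comparative statics and the limit statement follow at once from the closed-form formula. Direct differentiation of the ratio $f_{It}/x_t = (\kappa_I + r_a - r_b)/(\kappa_I - \iota_I)$ shows that it is increasing in $r_a$ and $\iota_I$, decreasing in $r_b$, and tends to $1$ as $\kappa_I \to \infty$; its derivative with respect to $\kappa_I$ equals $(r_b - r_a - \iota_I)/(\kappa_I - \iota_I)^2$, which has constant sign in $\kappa_I$, so the convergence $f_{It} \to x_t$ is monotone. The only delicate point of the entire argument is the Fubini exchange, but this is handled by the integrability bound just displayed, so I anticipate no real obstacle.
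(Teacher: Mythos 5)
Your proposal is correct and follows essentially the same route as the paper: verify the integrability condition, specialize the theorem's formula to constant coefficients, and use the no-arbitrage martingale restriction \eqref{eq:NA Qb DT} to evaluate the expectation before differentiating for the comparative statics. The only cosmetic difference is that the paper packages the key computation as a change of measure (so that the exchange rate factors out as a constant under the other currency's pricing measure), whereas you extract $E^{Q_b}_t[x^*_s]=e^{(r_b-r_a)(s-t)}x^*_t$ directly from the same martingale condition and then apply Tonelli--Fubini; these are equivalent.
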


\begin{corollary}
If the interest factor
\begin{align}
	\iota_{It}= r_{bt}-r_{at}<\kappa_t
\end{align}
then the perpetual inverse futures price equals the spot price at all times and can be dynamically replicated by trading in the two riskless assets.
\end{corollary}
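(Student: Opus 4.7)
The plan is to verify that the ansatz $1/f_{It}=x^*_t$ satisfies both the pricing restriction \eqref{eq:pricing restriction inverse in CT} and the no-bubble condition \eqref{eq:TVC inverse in CT}, so that uniqueness from the inverse version of Theorem \ref{theorem:futures price CT-1} will deliver $f_{It}=x_t$ at all times; the replication claim then reduces to exhibiting an explicit self-financing portfolio.

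First, I would exploit the continuous-time reading of \eqref{eq:NA Qb DT}, which forces $e^{\int_0^t(r_{au}-r_{bu})du}\,x^*_t$ to be a $Q_b$-local martingale, to obtain the $Q_b$-drift decomposition $dx^*_t=(r_{bt}-r_{at})\,x^*_t\,dt+dL_t$ for some $Q_b$-local martingale $L$. Applying the product rule to $Z_t:=e^{-\int_0^t \kappa_{Iu}du}\,x^*_t$ and imposing $\iota_{It}=r_{bt}-r_{at}$ then gives
\begin{align}
dZ_t=(\iota_{It}-\kappa_{It})\,e^{-\int_0^t \kappa_{Iu}du}\,x^*_t\,dt+e^{-\int_0^t\kappa_{Iu}du}\,dL_t,
\end{align}
which upon integration yields the local martingale required by \eqref{eq:pricing restriction inverse in CT} with $1/f_{It}=x^*_t$. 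For the no-bubble condition I would stop along the reducing sequence of the $Q_b$-local martingale $e^{\int_0^{t}(r_{au}-r_{bu})du}\,x^*_t$ and invoke dominated convergence, using \eqref{eq:integrability inverse CT} together with $\kappa_{It}>\iota_{It}$ to drive $e^{-\int_0^{\sigma_n}\kappa_{Iu}du}\,x^*_{\sigma_n}$ to zero along $\sigma_n\uparrow\infty$. This no-bubble verification is the main technical obstacle, but in spirit it mirrors the argument behind Corollary \ref{corollary:equalizing CT}.

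For the replication claim, substituting $f_{It}=x_t$ and $\iota_{It}=r_{bt}-r_{at}$ reduces the $b$-denominated incremental cash flow to $dx^*_t-(r_{bt}-r_{at})\,x^*_t\,dt$. I would exhibit the self-financing portfolio that holds $\phi^a_t=1/B_{at}$ units of the $a$-bond, with $b$-value $x^*_t$, funded by a short position $\phi^b_t=-x^*_t/B_{bt}$ in the $b$-bond, so that its $b$-value is identically zero. Using $d(B_{at}x^*_t)=r_{at}B_{at}x^*_t\,dt+B_{at}\,dx^*_t$, a direct computation then gives $\phi^a_t\,d(B_{at}x^*_t)+\phi^b_t\,dB_{bt}=dx^*_t-(r_{bt}-r_{at})\,x^*_t\,dt$, which matches the cash flow stream exactly and completes the proof.
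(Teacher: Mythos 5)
Your It\^o verification that the candidate $1/f_{It}=x^*_t$ turns the expression in \eqref{eq:pricing restriction inverse in CT} into a local martingale is correct (by \eqref{eq:NA Qb DT} the $Q_b$-drift of $x^*$ is indeed $(r_{bt}-r_{at})x^*_t\,dt$), and your replicating portfolio is exactly the mirror image of the one the paper exhibits for the linear contract in the proof of Corollary \ref{corollary:equalizing CT}, so those parts stand. The overall strategy, however, is genuinely different from the paper's and has two gaps. First, the uniqueness clause of the inverse theorem that you invoke is only available under the integrability hypothesis \eqref{eq:integrability inverse CT}; the corollary does not assume it, and you cite it as if it were given. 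Establishing it from $\iota_{It}=r_{bt}-r_{at}<\kappa_{It}$ is precisely the first half of the paper's proof in the linear case: one writes $e^{-\int_0^s\kappa_{Iu}du}\left(\kappa_{Is}-\iota_{Is}\right)x^*_s\,ds=-\frac{B_{as}}{B_{bs}}x^*_s\,d\bigl(e^{-\int_0^s(\kappa_{Iu}-\iota_{Iu})du}\bigr)$ and uses the density $\left.\frac{dQ_a}{dQ_b}\right|_{\F_s}=\frac{B_{as}}{B_{bs}}\frac{x^*_s}{x^*_0}$ to bound the expectation by $x^*_0$. This step is not optional.

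Second, your treatment of the no-bubble condition does not go through as described. Condition \eqref{eq:TVC inverse in CT} quantifies over \emph{every} sequence of stopping times $\sigma_n\uparrow\infty$, so you are not free to ``stop along the reducing sequence'' of the martingale $\frac{B_{at}}{B_{bt}}x^*_t$; and \eqref{eq:integrability inverse CT} is an integral condition that does not dominate the boundary term $E^{Q_b}_t\bigl[e^{-\int_0^{\sigma_n}\kappa_{Iu}du}x^*_{\sigma_n}\bigr]$, so dominated convergence cannot be invoked in the way you suggest. Controlling that term requires writing it as $E^{Q_b}_t\bigl[e^{-\int_0^{\sigma_n}(\kappa_{Iu}-\iota_{Iu})du}\frac{B_{a\sigma_n}}{B_{b\sigma_n}}x^*_{\sigma_n}\bigr]$ and using uniform-integrability arguments for the positive $Q_b$-martingale $\frac{B_{at}}{B_{bt}}x^*_t$ together with $\int_0^\infty(\kappa_{Iu}-\iota_{Iu})du=\infty$, none of which you supply. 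The paper's route avoids this entirely: after verifying integrability it evaluates the theorem's explicit formula $1/f_{It}=E^{Q_b}_{t}\bigl[\int_t^\infty e^{-\int_t^{s}\kappa_{Iu}du}(\kappa_{Is}-\iota_{Is})x^*_s\,ds\bigr]$ by the same change of measure and obtains $x^*_t$ directly, never needing to check the no-bubble condition for the candidate. Restructuring along those lines closes both gaps; your portfolio argument can then be kept verbatim for the replication claim.
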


% subsection perpetual_inverse_futures (end)

\subsection{Perpetual quanto futures} % (fold)
\label{sub:perpetual_quanto_futures}

Let now $c$ denote a \emph{third} currency. \emph{Quanto} futures are contracts that give exposure to the $c/a$ exchange rate $z_t>0$ but are quoted, margined, and funded in currency $b$. Specifically, the size of the contract is set to one unit of $c$ and the periodic cash flows are computed in units of $a$ but paid in units $b$ after conversion using a fixed $a/b$ exchange rate $\chi^*$ specified in the contract. For example, the \href{https://www.bitmex.com/app/trade/ETHUSD}{BTC$|$ETH$|$USD} contract traded on the platform BitMEX is a perpetual futures contract on the ETH/USD exchange rate that operates in BTC using a fixed USD/BTC exchange rate. See \citet{BitMEX-Q} and \citet{alexander2023} for a discussion of the distinction between inverse and quanto derivatives.

The instantaneous cash flow \emph{in units of $b$} from a long position in the perpetual quanto futures contract is
\begin{align}
 dC_{qt}= \chi^* df_{qt}-\chi^*\left(\iota_{qt} z_t +\kappa_{qt}\left({f_{qt}}-z_t\right)\right)dt
\end{align}
where the funding parameters $(\iota_{qt},\kappa_{qt})$ are set by the exchange subject to $\kappa_{qt}>0$, and the absence of arbitrage requires that
\begin{align}
 \chi^* E^{Q_b}_\tau  \left[\int_{\tau}^\sigma \frac{dC_{qt}}{B_{bt}}\right]=0, \qquad \forall \tau\leq \sigma\in\S.
\end{align} 
Proceeding as in the linear case then shows that
\begin{align}\label{eq:pricing restriction quanto in CT}
  e^{-\int_0^t \kappa_{qu} du}f_{qt}+\int_0^t e^{-\int_0^s \kappa_{qu} du}\left(\iota_{qu}-\kappa_{qu}\right) z_s ds
  \in\Mloc(Q_b),
\end{align}
and imposing the no-bubble condition
\begin{align}\label{eq:TVC quanto in CT}
\lim_{n\to\infty} E^{Q_b}_{t}\left[e^{-\int_0^{\sigma_n} \kappa_{qu} du}{f_{q\sigma_n}}\right]=0, \qquad \forall t\geq 0\text{ and }(\sigma_n)_{n=1}^\infty\subset\S\text{ s.t. }\sigma_n\uparrow\infty
\end{align}
allows to uniquely determine the quanto futures price. 

\begin{theorem}
  Assume that 
  \begin{align}
    E^{Q_b}\left[ \int_0^\infty e^{-\int_0^{s} \kappa_{qu} du}\left|\kappa_{qs}-\iota_{qs}\right|z_sds\right]<\infty.\label{eq:integrability quanto CT}
  \end{align}
  Then the process
  \begin{align}
    f_{qt}=E^{Q_b}_{t}\left[ \int_t^\infty e^{-\int_t^{s} \kappa_{qu} du}\left(\kappa_{qs}-\iota_{qs}\right)z_sds\right]
  \end{align}
   is the unique solution to \eqref{eq:pricing restriction quanto in CT} that satisfies \eqref{eq:TVC quanto in CT}. If, furthermore, $\kappa_{qt}-\iota_{qt}>0$ then this solution can be represented as
   \begin{align}
   E^{Q_b}_t\left[e^{-\int_t^{\theta_{qt}} \iota_{qu} du}  z_{\theta_{qt}}\right]
  \end{align}
  where $\theta_{qt}$ is distributed according to
  \begin{align}
	  Q_{b}\left(\theta_{qt}\in ds \middle| \F\right)
	  =\ind{t\leq s}e^{-\int_t^s (\kappa_{qu}-\iota_{qu})du}\left(\kappa_{qs}-\iota_{qs}\right)ds,\qquad s\geq 0.
  \end{align}
   In particular, if the premium rate $\kappa_{q}$ is constant and the interest factor $\iota\equiv 0$ then the perpetual quanto futures price is given by 
  \begin{align}\label{eq:everlasting option as expectation}
  	f_{qt}=E^{Q_b}_t \left[\int_t^{\infty} \kappa e^{-\kappa (s-t) }z_s ds\right]=E^{Q_b}_t\left[z_{t+\tau_q}\right]
  \end{align}
  where $\tau_q$ is an exponentially distributed random time with mean $1/\kappa$.
\end{theorem}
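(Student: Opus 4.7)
The proof is structurally identical to that of Theorem \ref{theorem:futures price CT-1} and to the corresponding statement for the inverse contract, so my plan is to transport those arguments with $(x,\kappa,\iota,Q_a)$ replaced by $(z,\kappa_q,\iota_q,Q_b)$. The starting point is that \eqref{eq:pricing restriction quanto in CT} is a linear local-martingale condition on $f_{qt}$ with exactly the same functional form as \eqref{eq:pricing restriction in CT}, so the same existence/uniqueness machinery applies once the no-bubble condition \eqref{eq:TVC quanto in CT} is invoked.

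\emph{Existence.} Set $Y := \int_0^\infty e^{-\int_0^s \kappa_{qu} du}(\kappa_{qs}-\iota_{qs}) z_s ds$, which is $Q_b$-integrable by \eqref{eq:integrability quanto CT}. The tower property gives
\begin{align}
e^{-\int_0^t \kappa_{qu}du} f_{qt} + \int_0^t e^{-\int_0^s \kappa_{qu}du}(\iota_{qs}-\kappa_{qs}) z_s ds = E^{Q_b}_t[Y],
\end{align}
so the candidate solves \eqref{eq:pricing restriction quanto in CT} with the stronger property of being a uniformly integrable $Q_b$-martingale. For the no-bubble condition the same calculation yields
\begin{align}
E^{Q_b}_t\left[e^{-\int_0^{\sigma_n}\kappa_{qu}du} f_{q\sigma_n}\right]
= E^{Q_b}_t\left[\int_{\sigma_n}^\infty e^{-\int_0^s \kappa_{qu}du}(\kappa_{qs}-\iota_{qs}) z_s ds\right],
\end{align}
and dominated convergence under the envelope from \eqref{eq:integrability quanto CT} forces the right-hand side to zero along any $\sigma_n \uparrow \infty$.

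\emph{Uniqueness.} If $\tilde f_{qt}$ is another solution, then $e^{-\int_0^t \kappa_{qu}du}(\tilde f_{qt}-f_{qt}) \in \Mloc(Q_b)$ since both processes satisfy \eqref{eq:pricing restriction quanto in CT}. Picking a common localizing sequence $(\sigma_n)$ with $\sigma_n \uparrow \infty$ and applying optional stopping, for each fixed $t$ and $n$ sufficiently large that $\sigma_n \geq t$ one obtains
\begin{align}
e^{-\int_0^t \kappa_{qu}du}(\tilde f_{qt}-f_{qt})
= E^{Q_b}_t\left[e^{-\int_0^{\sigma_n}\kappa_{qu}du}(\tilde f_{q\sigma_n}-f_{q\sigma_n})\right].
\end{align}
Sending $n \to \infty$ and invoking \eqref{eq:TVC quanto in CT} separately for the two candidates forces the difference to vanish. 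This uniqueness step is the main technical point, since one must ensure the localization is compatible with simultaneous application of the no-bubble condition to both solutions; this is routine but requires taking the intersection of the two localizing sequences and passing to the limit carefully.

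\emph{Random-time representation and special case.} When $\kappa_{qt}>\iota_{qt}$, direct computation of the antiderivative shows that the proposed density $s \mapsto \ind{s \geq t} e^{-\int_t^s (\kappa_{qu}-\iota_{qu})du}(\kappa_{qs}-\iota_{qs})$ integrates to one on $[t,\infty)$, so $\theta_{qt}$ defines a bona fide conditional distribution. Applying Fubini under $Q_b$ and using the identity $e^{-\int_t^s \iota_{qu}du}\,e^{-\int_t^s(\kappa_{qu}-\iota_{qu})du}(\kappa_{qs}-\iota_{qs}) = e^{-\int_t^s \kappa_{qu}du}(\kappa_{qs}-\iota_{qs})$ then identifies $E^{Q_b}_t[e^{-\int_t^{\theta_{qt}} \iota_{qu}du} z_{\theta_{qt}}]$ with the expression for $f_{qt}$. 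Specializing to constant $\kappa_q$ and $\iota_q \equiv 0$ reduces the density to $\kappa_q e^{-\kappa_q(s-t)}$, which is the law of $t+\tau_q$ with $\tau_q \sim \mathrm{Exp}(\kappa_q)$.
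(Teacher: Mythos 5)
Your proof is correct and follows essentially the same route as the paper, which handles the quanto case by appeal to the linear proof (Theorem \ref{theorem:futures price CT-1}): localize the discounted price, invoke the no-bubble condition, and pass to the limit by dominated convergence under \eqref{eq:integrability quanto CT}; your split into an explicit existence check plus uniqueness-by-differencing is an equivalent reorganization of that argument, and your Fubini computation for the random-time representation and the constant-parameter specialization coincide with the paper's. One cosmetic point: the identity displayed in your existence step inherits the sign of the finite-variation term from \eqref{eq:pricing restriction quanto in CT} as printed, whereas the tower property actually gives
\begin{align}
e^{-\int_0^t\kappa_{qu}du}f_{qt}+\int_0^t e^{-\int_0^s\kappa_{qu}du}\left(\kappa_{qs}-\iota_{qs}\right)z_s\,ds=E^{Q_b}_t\left[Y\right],
\end{align}
i.e.\ the integrand should carry $\kappa_{qs}-\iota_{qs}$, which is the sign convention the paper itself uses in the body of the linear proof.
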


In contrast to the linear and inverse cases it is no longer possible to obtain explicit expressions for the futures price or  the equalizing interest factor without specifying a model for the exchange rates $z_t$ and $x_t$. This difficulty arises from the fact that the exchange rate $z_t$ applies to the currency pair $c/a$ and, thus, is unrelated to currency $b$. As a result, its drift under $Q_b$ depends on the covariance between changes in $z_t$ and changes in the $b/a$ exchange rate $x_t$, and this covariance can only be computed once we specify the sources of risk that affect theses processes.

To illustrate the required convexity adjustment in a simple setting, assume that the interest rates $\left(r_a,r_b,r_c,\iota,\kappa\right)$ are constants and that the exchange rates $\left(x_t,z_t\right)$ evolves according to
\begin{align}
  dx_t/x_t
  =\left(r_a-r_b\right)dt + \sigma_{x}^* dZ_{at},\label{eq:CT dynamics Black-Scholes}\\
  dz_t/z_t
  =\left(r_a-r_c\right)dt + \sigma_{z}^* dZ_{at},
\end{align}
where $r_c$ is the constant interest rate that applies to $c-$denominated riskfree deposits and loans, $(\sigma_{x},\sigma_z)$ are constant vectors of dimension $n$, and $Z_{at}$ is a $n-$dimensional Brownian under the probability measure $Q_a$. Since
\begin{align}
  \left.\frac{dQ_b}{dQ_a}\right|_{\F_t}=e^{(r_b-r_a)t}\frac{x_t}{x_0}=e^{-\frac12 \|\sigma_x\|^2 t+\sigma_x^* Z_{at}}
\end{align}
it follows from Girsanov's theorem that 
\begin{align}
dz_t/z_t=\left(r_a-r_c+\sigma_{x}^*\sigma_{z}\right)dt + \sigma_{z}^* dZ_{bt}
\end{align}
where $Z_{bt}$ is an $n-$dimensional Brownian under $Q_b$ and using this evolution allows to derive the following result:

\begin{proposition}
Assume that $\iota<\kappa$ and $r_c-\sigma_{x}^*\sigma_{z}-r_a+\kappa>0$ then
  \begin{align}
  f_{qt}=\frac{\left(\kappa-\iota\right)z_t}{r_c-\sigma_{x}^*\sigma_{z}-r_a+\kappa} 
  \end{align}
is decreasing in $r_c$ and $\iota$ as well as increasing in $r_a$ and $\sigma_{x}^*\sigma_{z}$, and converges monotonically to the spot price as the premium rate $\kappa\to\infty$. 
\end{proposition}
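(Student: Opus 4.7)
The plan is to evaluate the closed-form representation of the perpetual quanto futures price given in the preceding theorem, which in the constant-parameter case reads
\begin{align}
f_{qt} = (\kappa - \iota)\, E^{Q_b}_t\!\left[\int_t^\infty e^{-\kappa(s-t)} z_s\, ds\right],
\end{align}
by exploiting the Black--Scholes dynamics of $z_t$ under $Q_b$ derived via Girsanov's theorem immediately above the statement.

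First, I would verify the integrability hypothesis \eqref{eq:integrability quanto CT}. Since $z_t$ follows a geometric Brownian motion with drift $r_a - r_c + \sigma_x^{*}\sigma_z$ under $Q_b$, one has $E^{Q_b}_t[z_s] = z_t\, e^{(r_a - r_c + \sigma_x^{*}\sigma_z)(s-t)}$ for every $s \geq t$, and Tonelli reduces the unconditional integrability bound to the finiteness of
\begin{align}
|\kappa - \iota|\, z_0 \int_0^\infty e^{-(\kappa + r_c - r_a - \sigma_x^{*}\sigma_z)\, s}\, ds,
\end{align}
which holds exactly under the standing assumption $\kappa + r_c - r_a - \sigma_x^{*}\sigma_z > 0$. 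Invoking Fubini's theorem to swap the expectation with the time integral in the pricing formula, and applying the same exponential-moment computation at the conditional level, I obtain
\begin{align}
f_{qt} = (\kappa - \iota)\, z_t \int_t^\infty e^{-(\kappa + r_c - r_a - \sigma_x^{*}\sigma_z)(s-t)}\, ds = \frac{(\kappa - \iota)\, z_t}{\kappa + r_c - r_a - \sigma_x^{*}\sigma_z},
\end{align}
which is the announced formula.

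The comparative statics then follow by direct partial differentiation. Writing $D := r_c - r_a - \sigma_x^{*}\sigma_z$, one reads off $\partial_\iota f_{qt} = -z_t/(\kappa+D) < 0$, $\partial_{r_c} f_{qt} = -(\kappa-\iota) z_t/(\kappa+D)^2 < 0$, $\partial_{r_a} f_{qt} = (\kappa-\iota) z_t/(\kappa+D)^2 > 0$, and $\partial_{\sigma_x^{*}\sigma_z} f_{qt} = (\kappa-\iota) z_t/(\kappa+D)^2 > 0$, each sign being fixed by the positivity of $\kappa + D$ and $\kappa - \iota$. For the limit in $\kappa$, computing $\partial_\kappa (f_{qt}/z_t) = (D+\iota)/(\kappa+D)^2$ shows that the ratio is monotone in $\kappa$ for all $\kappa$ large enough to keep $\kappa + D > 0$, and since $f_{qt}/z_t \to 1$ as $\kappa \to \infty$, the convergence to the spot price is monotonic, the direction being dictated by the sign of $D + \iota$. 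The only mildly delicate step is the Fubini interchange, which is justified by the integrability verified at the outset; everything that follows is a routine calculation.
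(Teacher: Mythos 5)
Your proposal is correct and follows essentially the same route the paper intends: the appendix omits an explicit proof of this proposition on the grounds that it parallels the linear constant-coefficient case, and the text immediately preceding the statement derives the $Q_b$-dynamics of $z_t$ via Girsanov precisely so that the price reduces to the elementary exponential-moment integral you compute. Your verification of the integrability condition, the Fubini interchange, the closed-form evaluation, and the sign analysis of $\partial_\kappa(f_{qt}/z_t)=(D+\iota)/(\kappa+D)^2$ all match what the paper's argument requires.
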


% subsection perpetual_quanto_futures (end) 

% section linear_futures_pricing (end)

\section{Everlasting options} % (fold)
\label{sec:everlasting_futures_pricing}

Everlasting options work in a manner similar to perpetual futures but track a function of the spot price instead of the spot price itself. The idea and the name seem to go back to \citet{paradigm2021everlasting} and some specifications, such as quadratic perpetuals, are traded on some exchanges. See, for example \citet{medium2022sqeeth}. Note, however, that the name can be misleading: everlasting options are not actually options but perpetual future contracts. In particular, an everlasting call or put option is very different from a perpetual American option.

Let $\varphi:\R_+\to \R$ denote a payoff function. The futures price for the corresponding everlasting option is quoted and margined in units of $a$. As a result, the cumulative cash flows from holding a long position evolves according to
\begin{align}
 dC_{ot}=  df_{ot}-\kappa_{ot}\left(f_{ot}-\varphi(x_t)\right)dt
\end{align}
where $f_{ot}$ denotes the (futures) price of the everlasting option and $\kappa_{ot}>0$ is a premium rate set by the exchange.
The same arguments as in the linear case then imply that the absence of arbitrage requires that
\begin{align}\label{eq:pricing restriction eo in CT}
  e^{-\int_0^t \kappa_{Iu} du}f_{ot}+\int_0^t e^{-\int_0^s \kappa_{Iu} du}\kappa_{os}\varphi\left(x_s\right) ds
  \in  \Mloc(Q_a)
\end{align}
and imposing the no-bubble condition
\begin{align}\label{eq:TVC eo in CT}
  \lim_{\sigma\to\infty} E^{Q_a}_{t}\left[e^{-\int_0^{\sigma_n} \kappa_{ou} du} f_{o \sigma_n}\right]=0, \qquad \forall t\geq 0\text{ and }(\sigma_n)_{n=1}^\infty\subset\S\text{ s.t. }\sigma_n\uparrow\infty
\end{align}
allows to uniquely determine the inverse futures price. 

\begin{theorem}\label{theorem:everlasting option pricing}
  Assume that 
  \begin{align}
    E^{Q_a}\left[ \int_0^\infty e^{-\int_0^{s} \kappa_{ou} du}\kappa_{os} \varphi(x_s)ds\right]<\infty.\label{eq:integrability eo CT}
  \end{align}
  Then the process
  \begin{align}
    f_{ot}=E^{Q_a}_{t}\left[\int_t^\infty e^{-\int_t^{s} \kappa_{ou} du}\kappa_{os}\varphi(x_s) ds\right]
  \end{align}
   is the unique solution to \eqref{eq:pricing restriction eo in CT} that satisfies condition \eqref{eq:TVC eo in CT}. Furthermore, this unique solution can be expressed as 
   \begin{align}\label{eq:representation everlasting o}
  f_{ot}=  E^{Q_a}_t\left[\varphi\left( x_{\theta_{ot}}\right)\right]
  \end{align}
  where $\theta_{ot}\geq t$ is a random time that is defined an extension of the probability space and distributed according to
  \begin{align}
	  Q_{a}\left(\theta_{ot}\in ds\middle| \F\right)
	  =\ind{t\leq s}e^{-\int_t^s \kappa_{ou}du}\kappa_{os} ds,\qquad s\geq 0.
  \end{align}
   In particular, if $\kappa_{o}$ is constant then 
  \begin{align}
  	f_{ot}=E^{Q_a}_t \left[\int_t^{\infty} \kappa e^{-\kappa (s-t) }\varphi\left( x_s\right) ds\right]=E^{Q_a}_t\left[\varphi\left( x_{t+\tau_o}\right)\right]
  \end{align}
  where $\tau_o$ is an exponentially distributed random time with mean $1/\kappa$.
\end{theorem}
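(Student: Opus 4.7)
The argument follows the same three-step template as the proof of Theorem \ref{theorem:futures price CT-1} for linear futures, with $\kappa_{os}\varphi(x_s)$ now playing the role of the term $(\kappa_s-\iota_s)x_s$ that appeared in \eqref{eq:future price CT}. The plan is to (i) verify that the candidate process in the statement solves the pricing restriction \eqref{eq:pricing restriction eo in CT} and satisfies the no-bubble condition \eqref{eq:TVC eo in CT}; (ii) establish uniqueness among bubble-free solutions; and (iii) derive the random-time representation \eqref{eq:representation everlasting o} by recognising the discount factor times $\kappa_{os}$ as a conditional sub-probability density on $s\in[t,\infty)$.

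For existence, I would apply Fubini's theorem (justified by the integrability condition \eqref{eq:integrability eo CT}) together with the tower property to rewrite
\begin{align}
e^{-\int_0^t \kappa_{ou}du}f_{ot}+\int_0^t e^{-\int_0^s \kappa_{ou}du}\kappa_{os}\varphi(x_s)\,ds = E^{Q_a}_t\!\left[\int_0^\infty e^{-\int_0^s \kappa_{ou}du}\kappa_{os}\varphi(x_s)\,ds\right],
\end{align}
which, under \eqref{eq:integrability eo CT}, is a uniformly integrable $Q_a$-martingale and hence lies in $\Mloc(Q_a)$. The no-bubble condition then follows by writing $f_{o\sigma_n}$ as a conditional expectation of the tail integral from $\sigma_n$ to $\infty$ and invoking dominated convergence along any reducing sequence $\sigma_n\uparrow\infty$, again using \eqref{eq:integrability eo CT} as the dominating envelope.

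For uniqueness, if $g_{ot}$ is a second solution then the discounted difference $\Delta_t:=e^{-\int_0^t \kappa_{ou}du}(f_{ot}-g_{ot})$ is a $Q_a$-local martingale that must vanish at infinity in the sense of \eqref{eq:TVC eo in CT} applied to both candidates. Combining these two properties along a common reducing sequence and applying optional sampling forces $\Delta_t\equiv 0$. I expect this to be the main technical obstacle: a local martingale need not be a true martingale, so one must select a reducing sequence that is compatible with the no-bubble limits of both candidates and pass to the limit by a Fatou-type argument, exactly as in the proof of Theorem \ref{theorem:futures price CT-1}.

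Finally, for the random-time representation, the kernel $q_t(s):=\ind{s\geq t}\kappa_{os}e^{-\int_t^s \kappa_{ou}du}$ integrates to $1-\exp(-\int_t^\infty \kappa_{ou}du)$ over $s\in\R_+$. In the natural regime $\int_t^\infty \kappa_{ou}du=\infty$ this is a bona fide conditional density on $[t,\infty)$; enlarging the probability space by an independent uniform random variable and applying inverse-CDF sampling produces a random time $\theta_{ot}$ with the stated conditional law given $\F$, after which \eqref{eq:representation everlasting o} follows by iterated expectation. The constant-$\kappa_o$ specialisation is then immediate, with $\theta_{ot}=t+\tau_o$ for an exponential $\tau_o$ by the translation and memorylessness properties of the exponential distribution.
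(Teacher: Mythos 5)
Your proposal is correct and follows essentially the same route as the paper, which proves the linear-futures analogue (Theorem \ref{theorem:futures price CT-1}) by localizing the discounted local martingale, killing the boundary term with the no-bubble condition, passing to the limit by dominated convergence under the integrability condition, and obtaining the random-time representation by a Fubini/iterated-expectation computation; your splitting of the argument into separate existence and uniqueness-via-difference steps is an equivalent repackaging of the paper's single localization argument. Your remark that the kernel $\ind{s\geq t}\kappa_{os}e^{-\int_t^s\kappa_{ou}du}$ is a genuine conditional density only when $\int_t^\infty\kappa_{ou}\,du=\infty$ is a valid point of care that the paper leaves implicit, but it does not change the substance of the proof.
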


\citet*{angeris2023primer} study an inverse problem that is related to Theorem \ref{theorem:everlasting option pricing}. However, their results are not directly comparable to ours because the contract that they analyze is different.  Indeed, they consider an $a-$denominated perpetual contract in which the long pays the short some amount $\Phi_\tau$ \emph{upon entering the contract} at date $\tau$ as well as funding at some rate $F_t dt$ as long as the position is opened, and receives the amount $\Phi_\sigma$ upon exiting the contract at date $\sigma\geq \tau$. In this product specification, the cumulative net discounted cash flow of a long position over the holding period is
\begin{align}
	N:=\frac{\Phi_\sigma}{B_{a\sigma}} -\frac{\Phi_\tau}{B_{a\tau}}-\int_\tau^\sigma {\frac{F_s}{B_{as}} ds}=
	\int_\tau^\sigma d\left(\frac{\Phi_s}{B_{as}}\right)-\int_\tau^\sigma {\frac{F_s}{B_{as}} ds}.
\end{align}
By contrast, in our formulation the cumulative discounted cash flows from holding a long position in the corresponding perpetual futures would be
\begin{align}
	\int_\tau^\sigma \left( \frac{d\Phi_s}{B_{as}}-\frac{F_s}{B_{as}} ds\right)
	=N+\int_\tau^\sigma  r_{as}\frac{\Phi_s}{B_{as}}ds
\end{align}
because the contract is margined continuously rather than only at the beginning and end. We believe that our formulation more closely reflects the actual functioning of markets because in practice contracts do not require an upfront payment. 

Within our formulation, the problem analyzed by \citet{angeris2023primer} consists in determining the funding rate $F_t$ in such a way that the associated perpetual futures price $\Phi_t=\varphi(x_t)$ at all times for some given function.  If the exchange rate follows an It\^o process as in their paper, then \eqref{eq:NA Qa DT} implies that
\begin{align}
dx_t=\left(r_{at}-r_{bt}\right) x_t dt +  \Sigma_{xt}^* dZ_{at}
\end{align}
for some diffusion coefficient $\Sigma_{xt}$ and some $Q_a-$Brownian motion $Z_{at}$ of the same dimension.
In this case, the problem can be solved by a direct application of It\^o's lemma. Indeed, if $\varphi\in C^2$ then the funding rate
\begin{align}
	F_t(\varphi):=\varphi'(x_t)\left(r_{at}-r_{bt}\right) x_t+\frac12 \varphi''(x_t) \|\Sigma_{xt}\|^2
\end{align}
has the property that 
\begin{align}
	\int_{0}^t \frac{1}{B_{as}}\left(d \varphi(x_s)- F_s(\varphi)\right)ds
	\in \Mloc(Q_a).
\end{align}
Under appropriate integrability assumptions on $\varphi(x_t)$ this local martingale property in turn implies that
\begin{align}
	E^{Q_a}_\tau\left[\int_{\tau}^\sigma \frac{1}{B_{as}}\left(d \varphi(x_s)- F_s(\varphi)\right)ds\right]=0, \qquad \forall \tau\leq \sigma\in\S,
\end{align} 
and it follows that $\varphi(x_t)$ is the futures price induced by the funding rate $F_t(\varphi)$. In particular, for the identity function we find that
\begin{align}
	F_t(\varphi)=\left(r_{at}-r_{bt}\right) x_t
\end{align}
from which we recover the result of Corollary \ref{corollary:equalizing CT}. Note however that the latter corollary was proved without any assumption on the evolution of the exchange rate other than the no-arbitrage condition.

To illustrate the pricing of everlasting options, assume the exchange rate evolves according to \eqref{eq:CT dynamics Black-Scholes} and consider the call and put options with premium rate $\kappa>0$ and strike $K$. The associated prices are given by
\begin{align}
     {c_t}&= E^{Q_a}_t\left[\int_t^{\infty} \kappa e^{-\kappa(s-t)} \left({x}_s-K \right)^+ds  \right],\\
     {p_t}&= E^{Q_a}_t\left[\int_t^{\infty} \kappa e^{-\kappa(s-t)} \left({K}-x_s \right)^+ds  \right],
\end{align}
and satisfy the everlasting put-call parity
\begin{align}
        {c_t-p_t}&=E^{Q_a}\left[\int_t^{\infty} \kappa e^{-\kappa(s-t)} \left(x_s-K \right)ds  \right]=f(x_t)-K
\end{align}
where 
\begin{align}
 f(x_t):=E^{Q_a}_t\left[\int_t^{\infty} \kappa e^{-\kappa(s-t)} x_s ds  \right]=\frac{ \kappa x_t}{\kappa-r_a+r_b}
\end{align}
denotes the perpetual futures price with premium rate $\kappa$ and interest factor $\iota=0$.
Furthermore, a standard calculation based on properties of the geometric Brownian motion delivers an explicit formula for the everlasting option prices:

\begin{proposition} \label{proposition:everlasting_call_put_prices}
Assume that $\kappa-r_a+r_b>0$. Then the Black-Scholes-Merton prices of the everlasting call and put are explicitly given by
\begin{align}
  p_t&=c_t+K-f(x_t),\\[0.1cm]
  % \intertext{and}
  c_t&=\begin{cases}
     x_t^\Theta K^{1-\Theta} \left(\frac{\Pi\left(r_a - r_b\right) - \kappa}{\left(\Pi-\Theta\right)\left(\kappa-r_a+r_b \right)}\right), & x_t\leq K\\
     x_t^\Pi K^{1-\Pi} \left(\frac{\Theta\left(r_a - r_b\right) - \kappa}{\left(\Pi-\Theta\right)\left(\kappa-r_a+r_b\right)}\right) +  f(x_t) - K, & x_t>K
  \end{cases}
\end{align}
where $\Pi<0$ and $\Theta>1$ are the roots of the quadratic equation
\begin{align} \label{eq:quadratic}
\left(r_a-r_b\right) \xi+\frac12 \xi \left(\xi-1\right)\|\sigma_x\|^2- \kappa=0
\end{align}
associated with the dynamics of the $b/a$ exchange rate under $Q_a$.
\end{proposition}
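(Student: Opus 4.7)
The plan is to exploit the put--call parity established in the paragraphs preceding the proposition---so that only the call price needs to be derived---and then characterize that price as the unique well-behaved solution of a second-order linear ODE. Under $Q_a$, equation \eqref{eq:CT dynamics Black-Scholes} makes $x_t$ a time-homogeneous geometric Brownian motion, so the Markov property gives $c_t=C(x_t)$ where
\[
C(x):=E^{Q_a}\!\left[\int_0^\infty \kappa e^{-\kappa s}\bigl(X_s^x-K\bigr)^{+}\,ds\right]
\]
with $X^x$ started at $x$. Under the assumption $\kappa > r_a - r_b$, routine dominated-convergence arguments show that $C$ is finite, continuous, $C^1$ on $(0,\infty)$, and piecewise $C^2$ with a possible jump in the second derivative across $x=K$.

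The next step is to identify $C$ by applying It\^o's lemma to the process $Y_t:=e^{-\kappa t}C(x_t)+\int_0^t \kappa e^{-\kappa s}(x_s-K)^{+}\,ds$, which is a $Q_a$-martingale by definition of $C$; setting its drift to zero yields the Euler-type inhomogeneous ODE
\[
\tfrac12 \|\sigma_x\|^2 x^2 C''(x)+(r_a-r_b)\,x\,C'(x)-\kappa C(x)+\kappa(x-K)^{+}=0.
\]
The homogeneous version has characteristic polynomial exactly \eqref{eq:quadratic}, whose roots satisfy $\Pi<0<1<\Theta$ under the standing assumption on $\kappa$.

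On $\{x<K\}$ the ODE is homogeneous, so $C(x)=A_1 x^\Pi+A_2 x^\Theta$; since $\Pi<0$, the boundary condition $C(0^{+})=0$ forces $A_1=0$. On $\{x>K\}$ direct substitution reveals that $f(x)-K$ is a particular solution, and the linear growth bound $C(x)\le f(x)$ (from $(x-K)^{+}\le x$) combined with $\Theta>1$ eliminates the $x^\Theta$ branch. The two remaining coefficients $A_2$ and $B_1$ are then determined by imposing $C(K^-)=C(K^+)$ and $C'(K^-)=C'(K^+)$; solving the resulting $2\times 2$ linear system by elimination and simplifying with the identity $f(K)-K=K(r_a-r_b)/(\kappa-r_a+r_b)$ produces exactly the constants displayed in the statement. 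The put formula is then immediate from put--call parity.

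The hard part will be justifying the boundary and growth conditions that eliminate the two excluded power-function branches---equivalently, verifying that the Feynman--Kac representation selects the solution described above rather than any other. This reduces to a transversality check of the type encoded in \eqref{eq:TVC eo in CT} and can be carried out directly from the closed-form expression for $E^{Q_a}[e^{-\kappa t}x_t^\xi]$, whose behaviour as $t\to\infty$ forces the excluded coefficients to vanish. Once this is in place, the remainder is routine algebra on the $2\times 2$ linear system.
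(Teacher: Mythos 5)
Your proposal is correct and lands on the same ODE, the same particular solution $f(x)-K$, the same elimination of the $x^{\Pi}$ branch below the strike and the $x^{\Theta}$ branch above it, and the same $C^1$ value-matching at $K$, so in substance it follows the paper's route. The one genuine difference is the direction of the Feynman--Kac identification. You start from the probabilistic definition $C(x)$, assert its regularity (``routine dominated-convergence arguments show $C$ is \dots $C^1$ \dots and piecewise $C^2$''), apply It\^o to derive the ODE, and then defer to a transversality check to rule out the spurious homogeneous branches --- and you correctly flag that check as the hard part. The paper goes the other way: its Lemma \ref{lemma:useful} is a verification theorem showing that \emph{any} $C^1$, piecewise-$C^2$ solution of the ODE with linear growth must equal the expectation, via localization plus the class-$\mathbb{D}$ supermartingale bound $e^{-\rho t}(1+x_t)$. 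That direction buys two things your route has to pay for separately: (i) the a priori smoothness of the value function never needs to be established (it is inherited from the explicit ODE solution, whose existence is elementary), whereas differentiating $E^{Q_a}[(X_s^x-K)^+]$ twice in $x$ is not literally dominated convergence --- the kink is smoothed only by the lognormal density, which requires an extra argument; and (ii) the transversality/growth justification you postpone is exactly the content of the display culminating in \eqref{eq:v_upper_bound}, so nothing is left dangling. One small point in your favor: you eliminate the $x^{\Pi}$ branch on $\{x<K\}$ via $C(0^+)=0$, which is the right reason; the paper's blanket appeal to linear growth for both excluded branches is slightly loose there, since $\Pi<0$ makes that branch unbounded near the origin rather than superlinear at infinity.
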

\noindent

\begin{figure}[t!]
	\centering
	    \begin{tikzpicture}
    
    \begin{groupplot}[
      mlineplot,
      xmin=0,
      xmax=2,
      ymin=0,
      ymax=1,
      clip,
      axis on top=false,
      width=0.425\textwidth,
      height=0.55\textwidth,	  
      group style={group size= 2 by 1,horizontal sep=0.11\textwidth},
      xlabel={Underlying: $x$},
      cycle list={
            {very thick,aqua2,densely dashed},
            {ultra thick,aqua1}},
       legend style={draw=none,fill=none,font=\footnotesize},
       xtick={0,1,2},
       xticklabels={0,$K=1$,2},
       ytick={0,0.5,1},
       xminorgrids=true,
       yminorgrids=true,
       minor tick num=1,
       ]
       
       \nextgroupplot[
       ylabel={Call price},
       legend pos=north west]
       \addplot table[x index=0,y expr=\thisrowno{2}/100] {data/EVCalls.dat};      
       \addplot table[x index=0,y expr=\thisrowno{1}/100] {data/EVCalls.dat};
       \addplot table[x index=0,y expr=\thisrowno{2}/100] {data/EVCalls.dat};  
       \addplot[domain=0:2,thick] {max(0,x-1)} node[pos=0.7,below,sloped,font=\scriptsize] {Payoff: $(x-1)^+$};

       \legend{European,Everlasting}

       \nextgroupplot[
       ylabel={\footnotesize{Call Delta}},
       legend pos=south east]
       \addplot table[x index=0,y expr=\thisrowno{5}/100] {data/EVCalls.dat};      
       \addplot table[x index=0,y expr=\thisrowno{4}/100] {data/EVCalls.dat};  

       % \legend{European,Everlasting}

    \end{groupplot}

  \end{tikzpicture}
\caption{Everlasting call options}  
\fignotes{This figure compares the arbitrage prices (left) and deltas (right) of an everlasting call option with multiplier $\kappa_o=1$ and a European call option with maturity $T=1/\kappa_o$ in the Black-Scholes-Merton model with $r_a=r_b\equiv 0$ and $\|\sigma_x\|=1$. The two options have the same strike equal to one.}\label{fig:everlasting call}
\end{figure}
		

Figure \ref{fig:everlasting call} plots the arbitrage price $c(x)$ and Delta $c'(x)$ of an everlasting call with $\kappa=K=1$ in a model with zero interest rates, and compares these quantities with the price and delta of a European call with maturity $T=1/\kappa=1$. The figure shows that the everlasting price behaves essentially like the European price and confirms that everlasting options provide an effective way for investors to establish a long-term option-like exposure without incurring the risks and costs associated with the rolling of positions in successively maturing options. 

Explicit pricing formulas along these lines can be derived in closed form for any given payoff function $\varphi(x)$ as long as \eqref{eq:integrability eo CT} is satisfied (see \citet{Zervos} for a general formula in an diffusion context).
More generally, and as proposed by \citet{Kristensen}, one may consider everlasting  path dependent options in which the payoff $\varphi(x_t)$ is replaced by a functional $\varphi\left(x_u:u\leq t\right)$ of the path of the underlying. The initial price of such a derivative is 
\begin{align}
\kappa E^{Q_a}\left[\int_0^\infty  e^{-\kappa t}  \varphi\left(x_u:u\leq t\right) dt \right]=\kappa \int_0^\infty  e^{-\kappa t}  E^{Q_a}\left[\varphi\left(x_u:u\leq t\right)\right] dt 
\end{align}
and viewing the right handside as a Laplace transform in the maturity variable of the price of a path-dependent European derivative allows to exploit the host of explicit formulae available for such transforms (see, e.g., \citet{borodin} and \citet{jeanblanc}).
Example of everlasting path dependent options that can be priced in this way include lookbacks, barriers, occupation time derivatives, Parisian derivatives, and continuously sampled Asian options.

% section linear_futures_pricing (end)

\section{Conclusion} % (fold)
\label{sec:conclusion}

In this paper, we address the arbitrage pricing of perpetual currency futures contracts. We study discrete-time as
well as continuous-time formulations, and identify funding specifications for the linear and inverse contracts that
guarantee that the futures price coincides with the spot. In both cases, the required interest factor is a
simple function of the interest rates in the underlying currencies. Under the assumption of constant interest rates and funding parameters, we derive explicit model-free expressions for linear and inverse futures prices. Furthermore, we
show that, in general, the perpetual future price can represented as the discounted expected value
of the underlying price at a random time that reflects the funding specification of the contract. We also derive general continuous-time results for perpetual quanto futures as well as for everlasting options, and illustrate those results with closed form formulae for everlasting calls and puts in a Black-Scholes setting.

There are several directions in which to pursue the analysis of this paper. First, it would be natural to formulate a joint multifactor (e.g., affine) model for the exchange rates, interest rates, and funding parameters that delivers closed form solutions for perpetual futures prices and that can be estimated by relying on observed interest rates, funding parameters, and futures prices. Second, given that perpetual futures are likely to find traction in other classes than (crypto) currencies, it would be fruitful to adapt the pricing formulae of this paper to other asset classes such as equities and commodities. Third, and finally, we believe that it would be interesting to model the dynamic determination of funding parameters by the exchange in response to demand within an equilibrium setting where the price of the underlying, the perpetual futures price, and possibly the interest rates are endogenously determined.

% section conclusion (end)

\appendix

\section{Proofs} % (fold)
\label{sec:proofs}

\subsection{Discrete-time results} % (fold)
\label{sub:discrete_time}

Since the proofs are similar in the linear and inverse cases we only provide complete details for the results pertaining to linear contracts.

\begin{proof}[Proof of Theorem \ref{theorem:perpetual futures price DT-1}] 
  Assume that the process $f_t$ is a solution to \eqref{eq:pricing restriction in DT} that satisfies \eqref{eq:TVC in DT}. Passing to the limit in \eqref{eq:pricing restriction in DT} and using \eqref{eq:TVC in DT} shows that
  \begin{align}
    f_t=\lim_{T\to\infty} E^{Q_a}_{t}\left[ \sum_{\sigma=t}^{T-1} \left(\prod_{\tau=t}^{\sigma} \frac{1}{1+\kappa_\tau}\right)  \left(\kappa_\sigma-\iota_\sigma\right)x_{\sigma}\right]
  \end{align}
  and the conclusion now follows from condition \eqref{eq:integrability DT} by dominated convergence.  Under the additional assumption that $-1<\iota_t<\kappa_t$ we have that
  \begin{align}
  	E^{Q_a}_t&\left[\left(\prod_{\tau=t}^{\theta_t-1} \frac{1}{1+\iota_\tau}\right)  x_{\theta_t}\right]
  	=E^{Q_a}_t\left[\sum_{\sigma=t}^\infty  \left(\prod_{\tau=t}^{\sigma-1} \frac{1}{1+\iota_\tau}\right)  x_{\sigma} \, Q_{a}\left(\theta_t =\sigma \middle| \F\right) \right]\\
  	&=E^{Q_a}_t\left[\sum_{\sigma=t}^\infty  \left(\prod_{\tau=t}^{\sigma-1} \frac{1}{1+\iota_\tau}\right)  x_{\sigma}  \frac{\kappa_\sigma-\iota_\sigma}{1+\iota_\sigma}\left(\prod_{\tau=t}^{\sigma} \frac{1+\iota_\tau}{1+\kappa_\tau}\right) \right]\\
    &=E^{Q_a}_t\left[\sum_{\sigma=t}^\infty  \left(\prod_{\tau=t}^{\sigma} \frac{1}{1+\kappa_\tau}\right) \left(\kappa_\sigma-\iota_\sigma\right) x_{\sigma} \right]=f_t
  \end{align}
  where the last equality follows from the first part of the proof. The second part directly follows from the first when $\kappa$ is constant and $\iota\equiv 0$.
\end{proof}

\begin{proof}[Proof of Proposition \ref{proposition:perpetual futures price constant parameters DT}]
  Under the stated assumption
  \begin{align}
     \sum_{\sigma=0}^{\infty} \left(\prod_{\tau=0}^{\sigma} \frac{1}{1+\kappa}\right)  E^{Q_a}\left[x_{\sigma}\right]
    &= \frac{1}{1+\kappa}\sum_{\sigma=0}^{\infty} \left[\frac{1}{1+\kappa}\left(\frac{1+r_a}{1+r_b}\right)\right]^{\sigma}  x_{0}\\
    &= \frac{1+r_b}{r_b-r_a+\kappa\left(1+r_b\right)} x_{0}<\infty
  \end{align}
  where the second equality follows from the no-arbitrage restriction \eqref{eq:NA Qa DT}. Therefore, condition \eqref{eq:integrability DT} is satisfied and it thus follows from Theorem \ref{theorem:perpetual futures price DT-1} that
  \begin{align}
    f_t
    &=\frac{\kappa-\iota}{1+\kappa}\sum_{\sigma=t}^\infty \left(\frac{1}{1+\kappa}\right)^{\sigma -t}E^{Q_a}_t\left[ x_\sigma\right]\\
    &=\frac{\kappa-\iota}{1+\kappa}\sum_{\sigma=t}^\infty  \left[\frac{1}{1+\kappa}\left(\frac{1+r_a}{1+r_b}\right)\right]^{\sigma -t}x_t=\frac{\left(\kappa-\iota\right)\left(1+r_b\right)}{r_b-r_a+\kappa\left(1+r_b\right)} x_{t}
  \end{align}
  where the second equality also follows from \eqref{eq:NA Qa DT}. The comparative statics follow by differentiating the futures price. 
  
To prove the second part observe that under this specification the cash flow at date $t+1$ from a long position in the perpetual futures
\begin{align}
  x_{t+1}-x_t+ \frac{r_{at}-r_{bt}}{1+r_{bt}} x_t=m_t\left(1+r_{bt}\right)x_{t+1} -m_tx_t\left(1+r_{at}\right) 
\end{align} 
coincides with the outcome 
of a cash and carry trade that borrows $m_tx_t$ units of $a$ at rate $r_{at}$ to buy $m_t$ units of $b$ which are invested at rate $r_{bt}$ until date $t+1$ where the proceeds are converted back to units of $a$ and used to payback the loan.
\end{proof}

\begin{proof}[Proof of Corollary \ref{corollary:perpetual futures price equalize DT}]
  Under the stated assumptions
  \begin{align}
     & E^{Q_a}  \left[ \sum_{\sigma=0}^{\infty} \left(\prod_{\tau=0}^{\sigma} \frac{1}{1+\kappa_\tau}\right) \left|\kappa_\sigma-\iota_\sigma\right| x_{\sigma}\right]\\
             &=     E^{Q_a} \left[ \sum_{\sigma=0}^{\infty} \left(\prod_{\tau=0}^{\sigma-1} \frac{1}{1+\kappa_\tau}\frac{1+r_{a\tau}}{1+r_{b\tau}}\right) \frac{r_{b\sigma}-r_{a\sigma}+ \kappa_\sigma\left(1+r_{b\sigma}\right)}{\left(1+\kappa_\sigma\right)\left(1+r_{b\sigma}\right)}  \left(\frac{B_{b\sigma}}{B_{a\sigma}} x_{\sigma}\right)\right]\\
             &=     E^{Q_b} \left[ \sum_{\sigma=0}^{\infty} \left(\prod_{\tau=0}^{\sigma-1} \frac{1}{1+\kappa_\tau}\frac{1+r_{a\tau}}{1+r_{b\tau}}\right) \frac{r_{b\sigma}-r_{a\sigma}+ \kappa_\sigma\left(1+r_{b\sigma}\right)}{\left(1+\kappa_\sigma\right)\left(1+r_{b\sigma}\right)}  \right] x_0 
             =x_0<\infty
  \end{align}
  where the second equality follows from the fact that absent arbitrage opportunities the pricing measures are related by
  \begin{align}\label{eq:dQa/dQb in DT}
    \left.\frac{dQ_a}{dQ_b}\right|_{\F_t}=\frac{B_{bt}}{B_{at}} \frac{x_{t}}{x_0},\qquad t\geq 0.
  \end{align}
  Therefore, condition \eqref{eq:integrability DT} is satisfied and Theorem \ref{theorem:perpetual futures price DT-1} shows that the perpetual futures price is given by
  \begin{align}
    f_t
    &=E^{Q_a}_t \left[ \sum_{\sigma=t}^{\infty} \left(\prod_{\tau=t}^{\sigma-1} \frac{1}{1+\kappa_\tau}\frac{1+r_{a\tau}}{1+r_{b\tau}}\right) \frac{r_{b\sigma}-r_{a\sigma}+ \kappa_\sigma\left(1+r_{b\sigma}\right)}{\left(1+\kappa_\sigma\right)\left(1+r_{b\sigma}\right)}  \left(\frac{B_{b\sigma}B_{at}}{B_{a\sigma}B_{bt}} x_{\sigma}\right)\right]\\
    &=E^{Q_b}_t \left[ \sum_{\sigma=t}^{\infty} \left(\prod_{\tau=t}^{\sigma-1} \frac{1}{1+\kappa_\tau}\frac{1+r_{a\tau}}{1+r_{b\tau}}\right) \frac{r_{b\sigma}-r_{a\sigma}+ \kappa_\sigma\left(1+r_{b\sigma}\right)}{\left(1+\kappa_\sigma\right)\left(1+r_{b\sigma}\right)} \right] x_t=x_t
  \end{align}
  where the second equality follows from \eqref{eq:dQa/dQb in DT}.
\end{proof}

% \begin{proof}[Proof of Theorem \ref{theorem:perpetual futures price DT-2}]
%   Under the stated assumption
%   \begin{align}
%   	E^{Q_a}_t&\left[\left(\prod_{\tau=t}^{\theta_t-1} \frac{1}{1+\iota_\tau}\right)  x_{\theta_t}\right]
%   	=E^{Q_a}_t\left[\sum_{\sigma=t}^\infty  \left(\prod_{\tau=t}^{\sigma-1} \frac{1}{1+\iota_\tau}\right)  x_{\sigma} \, Q_{a}\left(\theta_t =\sigma \middle| \F\right) \right]\\
%   	&=E^{Q_a}_t\left[\sum_{\sigma=t}^\infty  \left(\prod_{\tau=t}^{\sigma-1} \frac{1}{1+\iota_\tau}\right)  x_{\sigma}  \frac{\kappa_\sigma-\iota_\sigma}{1+\iota_\sigma}\left(\prod_{\tau=t}^{\sigma} \frac{1+\iota_\tau}{1+\kappa_\tau}\right) \right]\\
%     &=E^{Q_a}_t\left[\sum_{\sigma=t}^\infty  \left(\prod_{\tau=t}^{\sigma} \frac{1}{1+\kappa_\tau}\right) \left(\kappa_\sigma-\iota_\sigma\right) x_{\sigma} \right]=f_t
%   \end{align}
%   where the last equality follows from Theorem \ref{theorem:futures price CT-1}. The second part directly follows from the first when $\kappa$ is constant and $\iota\equiv 0$.
% \end{proof}

% subsection discrete_time (end)

\subsection{Continuous-time results} % (fold)
\label{sub:continuous_time}

Since the proofs are similar in the linear, inverse, and quanto cases we only provide complete details for the results pertaining to linear contracts.

\begin{proof}[Proof of Theorem \ref{theorem:futures price CT-1}]
Assume that the process $f_t$ satisfies both \eqref{eq:pricing restriction in CT} and \eqref{eq:TVC in CT}. Due to \eqref{eq:pricing restriction in CT} we have that 
\begin{align}
	e^{-\int_0^t \kappa_u du}f_t+\int_0^t e^{-\int_0^s \kappa_u du}\left(\iota_s-\kappa_s\right) x_s ds\in \Mloc(Q_a).
\end{align}
Therefore, it follows that for any given date $t\geq 0$ there exists a sequence of stopping times $(\sigma_n)_{n=1}^\infty$ such that $t\leq \sigma_n\uparrow\infty$ and
\begin{align}
	f_t=E^{Q_a}_t \left[ e^{-\int_t^{\sigma_n} \kappa_u du}f_{\sigma_n}+\int_t^{\sigma_n} e^{-\int_t^s \kappa_u du}\left(\kappa_s-\iota_s\right) x_s ds\right], \qquad n\geq 1.
\end{align}
Letting $n\to\infty$ on both sides of this equality and using the no-bubble condition \eqref{eq:TVC in CT} then shows that we have
\begin{align}
	f_t=\lim_{n\to\infty} E^{Q_a}_t \left[ \int_t^{\sigma_n} e^{-\int_t^s \kappa_u du}\left(\kappa_s-\iota_s\right) x_s ds\right]
\end{align}
and the required conclusion now follows from \eqref{eq:integrability CT} by dominated convergence. Note that if $\kappa_t-\iota_t\geq 0$ then \eqref{eq:integrability CT} becomes superfluous since the result can then be obtained by monotone convergence. However, nothing guarantees that the perpetual futures price process in \eqref{eq:future price CT} is well defined under this weaker assumption.

To establish the second part of the statement note that, under the given additional assumption, we have
\begin{align}
	E^{Q_a}_t\left[e^{-\int_t^{\theta_t} \iota_u du}  x_{\theta_t}\right]
	&=E^{Q_a}_t\left[\int_t^\infty  e^{-\int_t^{s} \iota_u du}  x_{s}Q_{a}\left(\theta_t \in ds\middle| \F\right) \right]\\
	&=E^{Q_a}_t\left[\int_t^\infty  e^{-\int_t^{s} \kappa_u du}\left(\kappa_s-\iota_s\right)  x_{s}ds\right]=f_t
\end{align}
where the last equality follows from the first part of the statement. The last part directly follows by letting $\kappa$ be constant and setting $\iota=0$.
\end{proof}

\begin{proof}[Proof of Proposition \ref{proposition:constant coefficients CT}]
Under the stated assumption
\begin{align}
	 E&^{Q_a}\left[ \int_0^\infty e^{-\kappa s}\left|\kappa-\iota\right| x_sds\right]\\
	&= E^{Q_a}\left[ \int_0^\infty e^{-\left(\kappa_u-r_{a}+r_{b}\right)s}\left(\kappa-\iota\right) \left(\frac{B_{bs}}{B_{as}}x_s\right)ds\right]\\
	&= E^{Q_b}\left[\int_0^\infty e^{-\left(\kappa_u-r_{a}+r_{b}\right)s} \left(\kappa-\iota\right) x_0 ds\right]=\frac{\left(\kappa-\iota\right)x_0}{\kappa-r_a+r_b} <\infty
\end{align}
where the third equality follows $\kappa-r_{a}+r_{b}>0$ and the fact that pricing measures are related by \eqref{eq:relation Qa/Qb in CT}. This shows that condition \eqref{eq:integrability CT} is satisfied. Therefore, Theorem \ref{theorem:futures price CT-1} and the same change of probability now imply that 
\begin{align}
	f_t
	&=E^{Q_a}_{t}\left[ \int_t^\infty e^{-\kappa(s-t)}\left(\kappa-\iota\right)x_sds\right]\\
	&=E^{Q_a}_{t}\left[ \int_t^\infty e^{-\left(\kappa-r_a+r_b\right)(s-t)}\left(\kappa-\iota\right) \left(\frac{B_{bs}B_{at}}{B_{as}B_{bt}}x_s\right) ds\right]\\
	&= E^{Q_b}_{t}\left[ \int_t^\infty e^{-\left(\kappa-r_a+r_b\right)(s-t)}\left(\kappa-\iota\right) x_t ds\right]=\frac{\left(\kappa-\iota\right)x_t}{\kappa-r_a+r_b}. 
\end{align}
This establishes the required pricing formula and the comparative statics now follow by differentiating the result.
\end{proof}

\begin{proof}[Proof of Corollary \ref{corollary:equalizing CT}]
Under the stated assumption
\begin{align}
	 E^{Q_a}&\left[ \int_0^\infty e^{-\int_0^{s} \kappa_u du}\left|\kappa_s-\iota_s\right| x_sds\right]\\
	&= -E^{Q_a}\left[ \int_0^\infty   \left(\frac{B_{bs}}{B_{as}}x_s\right) d \left(e^{-\int_0^{s} \left(\kappa_u-r_{au}+r_{bu}\right) du}\right)\right]\\
	&= -E^{Q_b}\left[ \int_0^\infty  x_0 d \left(e^{-\int_0^{s} \left(\kappa_u-r_{au}+r_{bu}\right) du}\right) ds\right]=x_0<\infty
\end{align}
where the third equality follows from $\kappa_t-r_{at}+r_{bt}>0$ and the second from the fact that the pricing measures are related by
\begin{align}\label{eq:relation Qa/Qb in CT}
	\left.\frac{dQ_b}{dQ_a}\right|_{\F_s}=\frac{B_{bs}}{B_{as}}\frac{x_s}{x_0},\qquad 0\leq s<\infty.
\end{align}
This shows that condition \eqref{eq:integrability CT} is satisfied. Therefore, Theorem \ref{theorem:futures price CT-1} and the same change of probability now imply that 
\begin{align}
	f_t
	&=E^{Q_a}_{t}\left[ \int_t^\infty e^{-\int_t^{s} \kappa_u du}\left(\kappa_s-\iota_s\right)x_sds\right]\\
	&=-E^{Q_a}_{t}\left[ \int_t^\infty  \left(\frac{B_{bs}B_{at}}{B_{as}B_{bt}}x_s\right)   d \left(e^{-\int_t^{s} \left(\kappa_u-r_{au}+r_{bu}\right)du}\right)
\right]\\
	&=- E^{Q_b}_{t}\left[ x_t \int_t^\infty   d \left(e^{-\int_t^{s} \left(\kappa_u-r_{au}+r_{bu}\right)du}\right)\right]=x_t.
\end{align}
To establish the second part consider the self-financing strategy that starts from zero at some stopping time $\tau$, is long one unit of the contract, short $n_{t}=-1/B_{bt}$ units of the $b-$riskfree asset, and invests the remainder in the $a-$riskfree asset. Under the given specification the value of this strategy evolves according to
\begin{align}
  dv_t
  &=r_{at}\left(v_t-n_t B_{bt} x_t\right)dt + n_t d\left(B_{bt}x_t\right)+dx_t-\left(r_{at}-r_{bt}\right)x_t dt\\
  &=r_{at}\left(v_t+x_t\right)dt - \left(dx_t+ r_{bt}x_tdt\right)+dx_t-\left(r_{at}-r_{bt}\right)x_t dt=r_{at}v_t dt
\end{align}
subject to the initial condition $v_\tau=0$. This readily implies that $v_t=0$ for all $t\geq \tau$ and completes the proof.
\end{proof}

\begin{proof}[Proof of Proposition \ref{proposition:everlasting_call_put_prices}]
Equation \eqref{eq:everlasting option as expectation} and Lemma \ref{lemma:useful} below show that the everlasting call price $c_t=c(x_t)$ is the unique solution to
\begin{align}
  \kappa c'(x)=(r_a-r_b) x c'(x)+\frac12 x^2\|\sigma_x\|^2 c''(x)+\kappa\left(x-K\right)^+
\end{align}
in the set of linearly growing, $C^1$, and piecewise $C^2$ functions on $\R_+$. Standard results show that the general solution to this equation is 
\begin{align}
c(x)=\begin{cases}
x^\Theta    A_0  +x^\Pi A_1 , & x\leq  K,\\
x^\Theta    A_2  +x^\Pi A_3 +f(x)-K, & x > K,
  \end{cases}
\end{align}
for some constants $(A_i)_{i=0}^3$ where the exponents are defined as in the statement. Since the solution we seek grows at most linearly we must have that $A_1=A_2=0$. On the other hand, requiring that the solution is $C^1$ at the strike delivers a linear system for $(A_0,A_4)$ and solving that system gives the stated result.
\end{proof}

\noindent
Let $\rho>0$ and consider the function
\begin{align}\label{eq:definition v}
v(x_t):=E_t\left[\int_t^{\infty} e^{-\rho(s-t)} \ell\left(x_s\right) d s\right]
\end{align}
where $\ell:\R_+\to\R$ satisfies a \emph{linear growth condition} and $x_t$ is a geometric Brownian motion with parameters $(\mu, \sigma)$ such that $\rho-\mu>0$.

\begin{lemmaA}\label{lemma:useful}
The function $v(x)$ in \eqref{eq:definition v} is the unique solution to
\begin{align}\label{eq:ODE for w}
\rho w(x)=\mu x w^{\prime}(x)+\frac{1}{2} \sigma^2 x^2 w^{\prime \prime}(x)+\ell(x)
\end{align}
in the space of $C^1$ and piecewise $C^2$ functions on $\R_+$ that satisfy a linear growth condition.
\end{lemmaA}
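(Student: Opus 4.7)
The plan is to establish the lemma as a verification theorem: show first that any $C^1$, piecewise $C^2$, linearly growing solution $w$ of \eqref{eq:ODE for w} necessarily coincides with $v$, and then check that $v$ itself lies in that regularity class. Well-definedness of $v$ is immediate: since $E[x_s]=x e^{\mu s}$ when $x_0=x$ and $|\ell(y)|\leq C(1+y)$ by the linear growth hypothesis,
\begin{align}
E\left[\int_0^\infty e^{-\rho s}|\ell(x_s)|ds\right]\leq C\int_0^\infty e^{-\rho s}(1+x e^{\mu s})ds<\infty
\end{align}
because $\rho-\mu>0$.

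For uniqueness, fix $x>0$ and let $w$ be any $C^1$, piecewise $C^2$, linearly growing solution of \eqref{eq:ODE for w}. I would apply It\^o's formula (in its generalized form for piecewise $C^2$ functions) to $e^{-\rho t}w(x_t)$ with $x_0=x$; the finite set at which $w''$ jumps is negligible for the $d\langle x\rangle_s$ integral since $\langle x\rangle$ is absolutely continuous, so the usual expansion holds. Using \eqref{eq:ODE for w}, the drift terms cancel to leave
\begin{align}
e^{-\rho t}w(x_t)+\int_0^t e^{-\rho s}\ell(x_s)ds=w(x)+\int_0^t e^{-\rho s}\sigma x_s w'(x_s)dW_s,
\end{align}
where $W$ denotes the Brownian motion driving $x$, so the left-hand side is a local martingale. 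I would localize with $\tau_n=\inf\{t:x_t\notin(1/n,n)\}\wedge n$, take expectations, and pass to the limit: the linear growth of $w$ combined with $\rho-\mu>0$ gives $E[e^{-\rho\tau_n}w(x_{\tau_n})]\to 0$, while dominated convergence (using the integrability bound just derived) gives $E\left[\int_0^{\tau_n}e^{-\rho s}\ell(x_s)ds\right]\to v(x)$. This yields $w(x)=v(x)$, and since $x$ was arbitrary, $w\equiv v$.

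For existence I would check that $v$ itself lies in the claimed regularity class and solves \eqref{eq:ODE for w}. Writing $v(x)=\int_0^\infty e^{-\rho s}(P_s\ell)(x)ds$ where $P_s$ is the lognormal transition semigroup of $x_t$, smoothness of the lognormal density for each $s>0$ allows differentiation under the integral and yields $C^1$ regularity of $v$ on all of $\R_+$ and $C^2$ regularity wherever $\ell$ is itself $C^2$. Standard resolvent identities for the infinitesimal generator $A=\mu x\partial_x+\frac12\sigma^2 x^2\partial_{xx}$ then give $(\rho-A)v=\ell$ at every such point, which is precisely \eqref{eq:ODE for w}. The main obstacle I anticipate is the rigorous handling of regularity and of the generalized It\^o formula across the kinks: showing that the local-time contribution vanishes for a $C^1$ integrand, and verifying $C^1$ matching of $v$ at the singular points of $\ell$ via dominated convergence, are standard but technical; once they are in hand the remainder is a routine Feynman--Kac verification.
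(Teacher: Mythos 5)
Your proposal is correct and follows essentially the same route as the paper: Itô's formula turns any linearly growing $C^1$, piecewise $C^2$ solution $w$ of the ODE into the local martingale $e^{-\rho t}w(x_t)+\int_0^t e^{-\rho s}\ell(x_s)\,ds$, and localization plus the bound $e^{-\rho \tau_n}|w(x_{\tau_n})|\leq \alpha\, e^{-\rho \tau_n}(1+x_{\tau_n})$ together with dominated convergence for the $\ell$-integral forces $w=v$. The only cosmetic differences are that you verify existence directly via the resolvent representation of $v$ (the paper simply invokes standard ODE results and lets the uniqueness step identify that solution with $v$) and that you are slightly more explicit about the generalized Itô formula across the kinks, while the paper is slightly more explicit about the class-$\mathbb{D}$ supermartingale argument justifying $E[e^{-\rho\tau_n}w(x_{\tau_n})]\to 0$.
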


\begin{proof}
The existence of a solution with the required properties follows from standard results on second order ODEs. Let $w(x)$ denote any such a solution. Since $w(x)$ solves \eqref{eq:ODE for w} it follows from It\^o's lemma that
\begin{align}
e^{-\rho t} w\left(x_t\right)+\int_0^t e^{-\rho s} \ell\left(x_s\right) d s
\end{align}
is a local martingale. Therefore, for any given $0\leq t<\infty$ there exist an increasing sequence of stopping times $(\tau_n)_{n=1}^\infty$ such that $t\leq \tau_n \to \infty$ and
\begin{align}\label{eq:w martingale}
w\left(x_t\right)=E_t\left[\int_t^{\tau_n} e^{-\rho s} \ell\left(x_s\right) d s+e^{-\rho \tau_n} w\left(x_{\tau_n}\right)\right].
\end{align}
Since $w$ grows at most linearly and $\rho>\mu^+$ we have that
\begin{align}
e^{-\rho \tau_n}\left|w\left(x_{\tau_n}\right)\right| \leq  \alpha y_{\tau_n}
\end{align}
for some $\alpha>0$ where the process $y_t=e^{-\rho t}\left(1+x_t\right)$ is supermartingale of class $\mathbb{D}$ that converges to zero. This immediately implies that
\begin{equation}
\lim _{n \rightarrow \infty}\left|E_t\left[e^{-\rho \tau_n} w\left(x_{\tau_n}\right)\right]\right| \leq \lim _{n \rightarrow \infty} E_t\left[y_{\tau_n}\right]=E_t\left[\lim _{n \rightarrow \infty} y_{\tau_n}\right]=0. \label{eq:v_upper_bound}
\end{equation}
On other hand, since $\ell$ grows at most linearly we have that
\begin{align}
E_t\left[\int_t^{\tau_n} e^{-\rho s} \left|\ell\left(x_s\right)\right| d s\right]&\leq 
E_t\left[\int_t^{\infty} e^{-\rho s}\left|\ell\left(x_s\right)\right| d s\right] \\
&\leq \alpha E_t\left[\int_t^{\infty} e^{-\rho s}\left(1+x_s\right) d s\right]=\frac{\alpha x_t}{\rho-\mu}+\frac{\alpha}{\rho}
\end{align}
for some $\alpha>0$ and, therefore, 
\begin{align}
\lim _{n \rightarrow \infty} E_t\left[\int_t^{\tau_n} e^{-\rho s} \ell\left(x_s\right) d s\right]
=E_t\left[\int_t^{\infty} e^{-\rho s} \ell\left(x_s\right) d s\right]
\end{align}
by dominated convergence. Using these properties to pass to limit in \eqref{eq:w martingale} then shows that $w=v$ which also establishes the uniqueness claim.
\end{proof}

% subsection continuous_time (end)

% section proofs (end)

\section{An incorrect cash flow specification} % (fold)
\label{sec:An erroneous cash flow specification}
In the original version of their paper \citeauthor*{he2022fundamentals} show that in a continuous-time model with constant coefficients and $r_b=\iota= 0$ the perpetual futures price is  $\left(1+\frac{r_a}{\kappa}\right)x_t$ which is clearly different from the price $x_t/\left(1-\frac{r_a}{\kappa}\right) $
prescribed by Proposition \ref{proposition:constant coefficients CT} in this case. As we show below, the reason for this discrepancy is that the cash flow specification of \citeauthor{he2022fundamentals} is different from ours and, in fact, inconsistent with the assumption that entering a contract is costless (unless $r_a= 0$). 

\citet{he2022fundamentals-original} assume that the cumulative \emph{discounted} cash flow generated by a long position in one perpetual futures contract between two consecutive stopping times $\tau$ and $\sigma\geq \tau$ is 
\begin{align}\label{eq:incorrect spec}
e^{-r_a \sigma}\left(F_{\sigma}-F_\tau\right)-\int_\tau^\sigma{e^{-r_a s} \kappa \left(F_s-x_s\right) ds }.
\end{align}
where $F_t$ is the perpetual futures price. If entering the futures contract is costless then the value of this payoff at date $\tau$ should be zero for all stopping times $\tau\leq \sigma$, i.e. we should have 
\begin{align}\label{eq:restriction He et al.}
  E^{Q_a}_\tau \left[e^{-r_a \sigma}\left(F_{\sigma}-F_\tau\right)-\int_\tau^\sigma{e^{-r_a s} \kappa \left(F_s-x_s\right) ds }\right]=0, \qquad \tau\leq \sigma\in\S.
\end{align}
But this restriction implies that 
\begin{align}
  M_t(T)=F_{t\wedge T}-\int_{0}^{t\wedge T} e^{r_a (T-s)}\kappa \left(F_s-x_s\right)ds
\end{align}
is a martingale under $Q_a$ for any $T<\infty$ which is not possible if $r_a\neq 0$. Indeed, if that property was satisfied then the difference
\begin{align}
  M_t(T)-M_t(T+h)=\int_{0}^{t} \left(e^{r_a (T-s)}-e^{r_a (T+h-s)}\right)\kappa \left(F_s-x_s\right)ds 
\end{align}
would be a continuous $Q_a-$martingale of finite variation on $[0,T]$ and thus a constant. If $r_a=0$ then this is not a problem. However, if $r_a\neq 0$ then the constancy of the above difference requires that $F_t=x_t$ which is inconsistent with \eqref{eq:restriction He et al.} because the exchange rate \emph{cannot } be a martingale under $Q_a$ when $r_a\neq 0=r_b$. Intuitively, the problem with the specification in \eqref{eq:incorrect spec} is that, instead of being paid continuously, the futures margin $F_{\sigma}-F_\tau$ is paid in a lumpsum upon exiting the contract. 

% section section_name (end)

% bibligraphy

\setlength{\bibsep}{0cm} 
\bibliographystyle{plainnat} 
\small 
\bibliography{AHJ-bib.bib}

\begin{thebibliography}{29}
\providecommand{\natexlab}[1]{#1}
\providecommand{\url}[1]{\texttt{#1}}
\expandafter\ifx\csname urlstyle\endcsname\relax
  \providecommand{\doi}[1]{doi: #1}\else
  \providecommand{\doi}{doi: \begingroup \urlstyle{rm}\Url}\fi

\bibitem[Alexander et~al.(2020)Alexander, Choi, Park, and Sohn]{alexander2020bitmex}
Carol Alexander, Jaehyuk Choi, Heungju Park, and Sungbin Sohn.
\newblock Bitmex bitcoin derivatives: Price discovery, informational efficiency, and hedging effectiveness.
\newblock \emph{Journal of Futures Markets}, 40\penalty0 (1):\penalty0 23--43, 2020.

\bibitem[Alexander et~al.(2023)Alexander, Chen, and Imeraj]{alexander2023}
Carol Alexander, Ding Chen, and Arben Imeraj.
\newblock Crypto quanto and inverse options.
\newblock \emph{Mathematical Finance}, 33\penalty0 (4):\penalty0 1005--1043, 2023.

\bibitem[Angeris et~al.(2023)Angeris, Chitra, Evans, and Lorig]{angeris2023primer}
Guillermo Angeris, Tarun Chitra, Alex Evans, and Matthew Lorig.
\newblock A primer on perpetuals.
\newblock \emph{SIAM Journal on Financial Mathematics}, 14\penalty0 (1):\penalty0 17--30, 2023.

\bibitem[Bankman-Fried and White(2021)]{paradigm2021everlasting}
Sam Bankman-Fried and Dave White.
\newblock Everlasting options, May 2021.
\newblock URL \url{https://www.paradigm.xyz/2021/05/everlasting-options}.
\newblock [Blog post].

\bibitem[Borodin and Salminen(1996)]{borodin}
Andrei Borodin and Paavo Salminen.
\newblock \emph{Handbook of Brownian Motion: Facts and Formulae}.
\newblock Birkh{\"a}user Verlag, 1996.

\bibitem[Christin et~al.(2023)Christin, Routledge, Soska, and Zetlin-Jones]{Nicholas023}
Nicholas Christin, Bryan Routledge, Kyle Soska, and Ariel Zetlin-Jones.
\newblock The crypto carry trade.
\newblock Preprint at \url{http://gerbil.life/papers/CarryTrade.v1.2.pdf}, 2023.

\bibitem[Clark(2023)]{Clark2}
Joseph Clark.
\newblock Spanning with power perpetuals.
\newblock \emph{Opyn research paper}, 2023.
\newblock URL \url{https://ssrn.com/abstract=4317072}.

\bibitem[Clark et~al.(2024)Clark, Leone, and Robinson]{Clark}
Joseph Clark, Andrew Leone, and Dan Robinson.
\newblock Everything is a {P}erp, March 2024.
\newblock URL \url{https://research.opyn.co/everything-is-a-perp}.
\newblock [Blog post].

\bibitem[CME(2024)]{CME}
CME.
\newblock Overview of crypto-currency products, August 2024.
\newblock URL \url{https://www.cmegroup.com/markets/cryptocurrencies.html\#explore-our-cryptocurrency-products}.

\bibitem[De~Blasis and Webb(2022)]{deblasis2022arbitrage}
Riccardo De~Blasis and Alexander Webb.
\newblock Arbitrage, contract design, and market structure in bitcoin futures markets.
\newblock \emph{Journal of Futures Markets}, 42\penalty0 (3):\penalty0 492--524, 2022.

\bibitem[Gehr~Jr.(1988)]{gehr1988undated}
Adam Gehr~Jr.
\newblock Undated futures markets.
\newblock \emph{The Journal of Futures Markets}, 8\penalty0 (1):\penalty0 89, 1988.

\bibitem[Hayes(2016)]{BitMEX-Intro}
Arthur Hayes.
\newblock Why is {XBTUSD} is a superior trading product?, May 2016.
\newblock URL \url{https://blog.bitmex.com/why-xbtusd-is-a-superior-trading-product}.
\newblock [Blog post].

\bibitem[Hayes(2018)]{BitMEX-Q}
Arthur Hayes.
\newblock Why quanto?, August 2018.
\newblock URL \url{https://blog.bitmex.com/why-quanto}.
\newblock [Blog post].

\bibitem[Hayes(2021)]{Naka}
Arthur Hayes.
\newblock All aboard!, April 2021.
\newblock URL \url{https://blog.bitmex.com/all-aboard}.
\newblock [Blog post].

\bibitem[He et~al.(2022)He, Manela, Ross, and von Wachter]{he2022fundamentals-original}
Songrun He, Asaf Manela, Omri Ross, and Victor von Wachter.
\newblock Fundamentals of perpetual futures, 2022.
\newblock [First version].

\bibitem[He et~al.(2024)He, Manela, Ross, and von Wachter]{he2022fundamentals}
Songrun He, Asaf Manela, Omri Ross, and Victor von Wachter.
\newblock Fundamentals of perpetual futures.
\newblock Preprint at \url{https://papers.ssrn.com/sol3/papers.cfm?abstract_id=4301150}, August 2024.

\bibitem[Jeanblanc et~al.(2009)Jeanblanc, Yor, and Chesney]{jeanblanc}
Monique Jeanblanc, Marc Yor, and Marc Chesney.
\newblock \emph{Mathematical Methods for Financial Markets}.
\newblock Springer Finance. Springer London, 2009.

\bibitem[Johnson and Zervos(2007)]{Zervos}
Timothy Johnson and Mihail Zervos.
\newblock The solution to a second order linear ordinary differential equation with a non-homogeneous term that is a measure.
\newblock \emph{Stochastics: An International Journal of Probability and Stochastic Processes}, 79:\penalty0 363--382, 2007.

\bibitem[Madrigal-Cianci and Kristensen(2022)]{Kristensen}
Juan Madrigal-Cianci and Jesper Kristensen.
\newblock Time-efficient decentralized exchange of everlasting options with exotic payoff functions.
\newblock In \emph{2022 IEEE International Conference on Blockchain (Blockchain)}, pages 427--434, 2022.

\bibitem[Notte(2022)]{Medium-3}
Crypto Notte.
\newblock Making money with delta-neutral trading using perpetual swaps, May 2022.
\newblock URL \url{https://crypto-notte.medium.com/making-money-out-of-delta-neutral-trading-with-perpetual-swaps-9ce2759fc6f}.
\newblock [Blog post].

\bibitem[Palepu(2020{\natexlab{a}})]{Medium-1}
Aditya Palepu.
\newblock What are perpetual swaps?, May 2020{\natexlab{a}}.
\newblock URL \url{https://medium.com/derivadex/what-are-perpetual-swaps-130236587df2}.
\newblock [Blog post].

\bibitem[Palepu(2020{\natexlab{b}})]{Medium-2}
Aditya Palepu.
\newblock What is the funding rate for perpetual swaps?, June 2020{\natexlab{b}}.
\newblock URL \url{https://medium.com/derivadex/what-is-the-funding-rate-for-perpetual-swaps-a0335c4228a9}.
\newblock [Blog post].

\bibitem[{Perennial Labs}(2024)]{Perennial}
{Perennial Labs}.
\newblock Power perpetuals on perennial, June 2024.
\newblock URL \url{https://medium.com/perennial-protocol/power-perpetuals-on-perennial-606ec8cdd422}.
\newblock [Blog post].

\bibitem[Prospere(2022)]{medium2022sqeeth}
Wade Prospere.
\newblock Squeeth primer: {A} guide to understanding {O}pyn’s implementation of {S}queeth, January 2022.
\newblock URL \url{https://medium.com/opyn/squeeth-primer-a-guide-to-understanding-opyns-implementation-of-squeeth-a0f5e8b95684}.
\newblock [Blog post].

\bibitem[Schmeling et~al.(2022)Schmeling, Schrimpf, and Todorov]{schmeling2022crypto}
Maik Schmeling, Andreas Schrimpf, and Karamfil Todorov.
\newblock Crypto carry.
\newblock Preprint at \url{https://ssrn.com/abstract=4268371}, 2022.

\bibitem[Shiller(1993)]{shiller1993measuring}
Robert Shiller.
\newblock Measuring asset values for cash settlement in derivative markets: {H}edonic repeated measures indices and perpetual futures.
\newblock \emph{The Journal of Finance}, 48\penalty0 (3):\penalty0 911--931, 1993.

\bibitem[Szpruch et~al.(2024)Szpruch, Xu, Sabate-Vidales, and Aouane]{java}
Lukasz Szpruch, Jiahua Xu, Marc Sabate-Vidales, and Kamel Aouane.
\newblock Leveraged trading via lending platforms.
\newblock \emph{University of Edinburgh working paper}, 2024.
\newblock URL \url{https://ssrn.com/abstract=4713126}.

\bibitem[White(2021)]{White-Cartoon}
Dave White.
\newblock The cartoon guide to perps, March 2021.
\newblock URL \url{https://www.paradigm.xyz/2021/03/the-cartoon-guide-to-perps}.
\newblock [Blog post].

\bibitem[White et~al.(2021)White, Robinson, Koticha, Leone, Gauba, and Krishnan]{paradigm2021power}
Dave White, Dan Robinson, Zubin Koticha, Andrew Leone, Alexis Gauba, and Aparna Krishnan.
\newblock Power perpetuals, August 2021.
\newblock URL \url{https://www.paradigm.xyz/2021/08/power-perpetuals}.
\newblock [Blog post].

\end{thebibliography}

\end{document}